\theoremstyle{plain}
\newtheorem{remark}{Remark}
\newtheorem{definition}{Definition}
\newtheorem{lemma}{Lemma}
\pgfplotsset{
    compat=newest,
	tick label style={font=\small},
	label style={font=\small},
	legend style={font=\footnotesize},
    /pgfplots/legend image code/.code={%
        \draw[mark repeat=3,mark phase=2,#1]
            plot coordinates {
                (0cm,0cm)
                (0.2cm,0cm)
                (0.4cm,0cm)
            };
    },
}
\newlength\figureheight
\newlength\figurewidth
\newcommand{\eqd}{ \, .}
\newcommand{\eqc}{ \, ,}
\newcommand{\set}[1]{\mathcal{#1}}
\newcommand{\nt}{M}
\newcommand{\nd}{{n_d}}
\newcommand{\nz}{{n_z}}
\newcommand{\setst}[2]{\left\{#1 \, \middle| #2 \right\}}
\DeclareMathOperator{\E}{\mathbb{E}}
\DeclareMathOperator{\tr}{\text{tr}}
\DeclareMathOperator{\diag}{\text{diag}}
\newcommand{\tp}{{\mkern-1.5mu\mathsf{T}}}
\begin{document}

\title{\vspace{-0.01 cm}Cautious Model Predictive Control using Gaussian Process Regression}
\author{Lukas~Hewing, Juraj~Kabzan, 
        Melanie~N.~Zeilinger
\thanks{All authors are with the Institute for Dynamic Systems and Control, ETH Z\"urich.
        {\tt\footnotesize [lhewing|kabzanj|mzeilinger]@ethz.ch}}%
\thanks{This work was supported by the Swiss National Science Foundation under grant no. PP00P2 157601 / 1.}}

\maketitle


\begin{abstract}
Gaussian process (GP) regression has been widely used in supervised machine
learning due to its flexibility and inherent ability to describe uncertainty in
function estimation. In the context of control, it is seeing increasing use for
modeling of nonlinear dynamical systems from data, as it allows the direct
assessment of residual model uncertainty. We present a model predictive control
(MPC) approach that integrates a nominal system with an additive nonlinear part
of the dynamics modeled as a GP\@. Approximation techniques for propagating the
state distribution are reviewed and we describe a principled way of formulating
the chance constrained MPC problem, which takes into account residual
uncertainties provided by the GP model to enable cautious control. Using
additional approximations for efficient computation, we finally
demonstrate the approach in a simulation example, as well as in a hardware
implementation for autonomous racing of remote controlled race cars,
highlighting improvements with regard to both performance and safety over a
nominal controller.
\end{abstract}

\begin{IEEEkeywords}
	Model Predictive Control, Gaussian Processes, Learning-based Control
\end{IEEEkeywords}

\section{Introduction}\label{sc:introduction}
Many modern control approaches depend on accurate model descriptions to enable
safe and high performance control. Identifying these models, especially for
highly nonlinear systems, is a time-consuming and complex endeavor. Often times,
however, it is possible to derive an approximate system model, e.g.\ a
linear model with adequate accuracy close to some operating point, or a simple
model description from first principles. In addition, measurement data from 
previous experiments or during operation is often available, which
can be exploited to enhance the system model and controller performance.
In this paper, we present a model
predictive control (MPC) approach, which improves such a nominal model
description from data using Gaussian Processes (GPs) to safely enhance
performance of the system. 

Learning methods for automatically identifying dynamical models from data have
gained significant attention in the past years~\cite{Hunt1992} and
more recently also for robotic applications~\cite{Nguyen-Tuong2011,Sigaud2011}.
In particular, nonparametric methods, which
have seen wide success in machine learning, offer significant potential for
control~\cite{Pillonetto2014}. The appeal of using Gaussian Process regression
for model learning stems from the fact that it requires little prior process
knowledge and directly provides a measure of residual model uncertainty. 
In predictive control, GPs were successfully applied to improve control
performance when learning periodic time-varying
disturbances~\cite{Klenske2016a}. The task of learning the system dynamics as
opposed to disturbances, however, imposes the challenge of propagating
probability distributions of the state over the prediction horizon of the
controller. Efficient methods to evaluate GPs for Gaussian inputs have been
developed in~\cite{Quinonero-Candela2003,Kuss2006,Deisenroth2010}. By
successively employing these approximations at each prediction time step,
predictive control with GP dynamics has been presented in~\cite{Kocijan2004}.
In~\cite{Grancharova2008} a piecewise linear approximate explicit solution for
the MPC problem of a combustion plant was presented. Application of a one-step
MPC with a GP model to a mechatronic system was demonstrated in~\cite{Cao2016}
and the use for fault-tolerant MPC was presented in~\cite{Yang2015}. A
constrained tracking MPC for robotic applications, which uses a GP to improve
the dynamics model from measurement data, was shown in~\cite{Ostafew2016a}. A
variant of this approach was realized in~\cite{Ostafew2016}, where uncertainty
of the GP prediction is robustly taken into account using confidence bounds.
Solutions of GP-based MPC problems using sequential quadratic programming
schemes is discussed in~\cite{Cao2017}. A simulation study for the application
of GP-based predictive control to a drinking water network was presented
in~\cite{Wang2016} and data efficiency of these formulations for learning
controllers was recently demonstrated in~\cite{Kamthe2018}. 

The goal of this paper is both to provide an overview of existing techniques and
to propose extensions for deriving a systematic and efficiently solvable
approximate formulation of the MPC problem with a GP model, which incorporates
constraints and takes into account the model uncertainty for cautious control.
Unlike most previous approaches, we specifically consider the combination of a
nominal system description with an additive GP part which can be of different
dimensionality as the nominal model. This corresponds to a common scenario in
controller design, where an approximate nominal model is known and can be
improved on, offering a multitude of advantages. A nominal model allows for
rudimentary system operation and the collection of measurement data, as well as
the design of pre-stabilizing ancillary controllers which reduce the state
uncertainty in prediction. Additionally, it allows for learning only specific
effects from data, potentially reducing the dimensionality of the machine
learning task. For example, most dynamical models derived from physical laws
include an integrator chain, which can be directly represented in the nominal
dynamics and for which no nonlinear uncertainty model has to be learned from
data---it would rather add unnecessary conservatism to the controller. The
separation between system and GP model dimension is therefore key for the
application of the GP-based predictive control scheme to higher order systems.

The paper makes the following contributions. A review and compact summary of
approximation techniques for propagating GP dynamics and uncertainties is
provided and extended to the additive combination with nominal dynamics. The
approximate propagation enables a principled formulation of chance constraints
on the resulting state distributions in terms of probabilistic reachable
sets~\cite{Hewing2018b}. In addition, the nominal system description allows for
reductin the GP model learning to a subspace of states and inputs. We discuss
sparse GPs and a tailored selection of inducing points for MPC to reduce the
computational burden of the approach. The use of some or all of these techniques
is crucial to make the computationally expensive GP approach practically
feasible for control with sampling times in the millisecond range. We finally
present two application examples. The first is a simulation of an autonomous
underwater vehicle, illustrating key concepts and advantages in a simplified
setting. Second, we present a hardware implementation for autonomous racing of
remote controlled cars, showing the real-world feasibility of the approach for
complex high performance control tasks. To the best of our knowledge this is the
first hardware implementation of a Gaussian Process-based predictive control
scheme to a system of this complexity at sampling times of \SI{20}{ms}.

\section{Preliminaries}\label{sc:preliminaries}
\subsection{Notation}
%
The $i$-th element of a vector $x$ is denoted ${[x]}_i$. Similarly,
${[M]}_{ij}$ denotes element $ij$ of a matrix $M$, and ${[M]}_{\cdot i}$,
${[M]}_{i\cdot}$ its $i$-th column or row, respectively. We use $\text{diag}(x)$
to refer to a diagonal matrix with entries given by the vector $x$. 
The squared Euclidean norm weighted by $M$, i.e.\ $x^\tp Mx$
is denoted $\Vert x \Vert^2_M$. We use boldface to emphasize stacked quantities,
e.g.\ a collection of vector-valued data in matrix form. $\nabla \!  \! f(z)$ is
the gradient of $f$ evaluated at $z$. The Pontryagin set difference is denoted
$\set{A} \ominus \set{B} = \setst{a}{a+b \in \set{A} \ \forall b \in \set{B}}$.
A normally distributed vector $x$ with mean $\mu$ and variance $\Sigma$ is given
by $x \sim \mathcal{N}(\mu,\Sigma)$. The expected value of $x$ is
$\mathbb{E}(x)$, $\text{cov}(x,y)$ is the covariance between vectors $x$ and
$y$, and the variance $\text{var}(x) = \text{cov}(x,x)$. We use $p(x)$,
($p(x|y)$) to refer to the (conditional) probability densities of $x$. Similary
$\Pr(E)$, ($\Pr(E \,|\, A)$) denotes the probability of an event $E$ (given
$A$).

Realized quantities during closed loop control are time indexed using
parenthesis, i.e. $x(k)$, while quantities in prediction use subscripts, e.g.
$x_i$ is the predicted state $i$-steps ahead.
%
%
\subsection{Problem formulation}\label{ssc:problem_formulation}
%
We consider the control of dynamical systems that can be represented by the
discrete-time model
\begin{align} \label{eq:system}
	x(k\!+\!1) = f(x(k),u(k)) \! + \! B_d (g(x(k),u(k)) \! + \! w(k)) \eqc
\end{align}
where $x(k) \in \mathbb{R}^{n_x}$ is the system state and $u(k) \in
\mathbb{R}^{n_u}$ the control inputs at time $k$. The model is composed of a
known nominal part $f$ and additive dynamics $g$ that describe initially unknown
dynamics of the system, which are to be learned from data and are assumed to lie
in the subspace spanned by $B_d$. We consider i.i.d.\ process noise $w_k \sim
\mathcal{N}\left(0, \Sigma^w \right)$, which is spatially uncorrelated, i.e.\
has diagonal variance matrix $\Sigma^w = \diag([\sigma^2_1, \ldots,
\sigma^2_\nd])$. We assume that both $f$ and $g$ are differentiable functions.

The system is subject to state and input constraints $\set{X} \subseteq
	\mathbb{R}^{n_x}$, $\set{U} \subseteq \mathbb{R}^{n_u}$, respectively. The
	constraints are formulated as chance constraints, i.e.\ by
	enforcing
\begin{align*}
	\Pr(x(k) \in \mathcal{X}) &\geq p_x \eqc \\
	\Pr(u(k) \in \mathcal{U}) &\geq p_u \eqc
\end{align*}
where $p_x$, $p_u$ are the associated satisfaction probabilities.
%
\subsection{Gaussian Process Regression}\label{ssc:gaussian_processes}
%
Gaussian process regression is a nonparametric framework for
nonlinear regression. A GP is a probability distribution over functions,
such that every finite sample of function values is jointly Gaussian
distributed. We apply Gaussian processes regression to infer the noisy
vector-valued function $g(x,u)$ in~\eqref{eq:system} from previously collected
measurement data of states and inputs $\{ (x_j , u_j), \, j = 0, \dots, \nt\}$.
State-input pairs form the input data to the GP and the corresponding
outputs are obtained from the deviation to the nominal system model:
\[
	y_j = g(x_j, u_j) + w_j =  B_d^\dagger \left(x_{j+1} - f(x_j, u_j) \right)  \eqc
\]
where $B_d^\dagger$ is the Moore-Penrose pseudo-inverse. Note that the
measurement noise on the data points $w_j$ corresponds to the process noise
in~\eqref{eq:system}. With $z_j:= [x^\tp_j,u^\tp_j]^\tp$, the data set of the GP is therefore
given by
\begin{align*}
	\mathcal{D} = \{ &\mathbf{y} = {[y_0, \ldots, y_M]}^\tp \in \mathbb{R}^{M\times n_d}, \\ &\mathbf{z} = {[z_0, \ldots, z_M]}^\tp  \in \mathbb{R}^{M\times n_z}\} \eqd
\end{align*}
Each output dimension is learned individually, meaning that we assume the
components of each $y_j$ to be independent, given the input data $z_j$.
Specifying a GP prior on $g$ in each output dimension $a \in \{1, \ldots, \nd
\}$ with kernel $k^a(\cdot,\cdot)$ and prior mean function $m^a(\cdot)$ results
in normally distributed measurement data with
\begin{equation}\label{eq:jointDataDistr}
	{[\mathbf{y}]}_{\cdot,a} \sim \mathcal{N}(m(\mathbf{z}),K_{\mathbf{z}\mathbf{z}}^a + I\sigma_a^2) \eqc
\end{equation}
where $K_{\mathbf{z}\mathbf{z}}^a$ is the Gram matrix of the data points, i.e.
${[K_{\mathbf{z}\mathbf{z}}^a]}_{ij} = k^a(z_i,z_j)$ and $m^a(\mathbf{z}) =
{[m^a(z_0),\ldots, m^a(z_\nt)]}^\tp$. 
The choice of kernel functions $k^a$ and its
parameterization is the determining factor for the inferred distribution of $g$
and is typically specified using prior process knowledge and
optimization~\cite{Rasmussen2006}, e.g. by optimizing the likelihood of the
observed data distribution~\eqref{eq:jointDataDistr}. Throughout this paper we
consider the squared exponential kernel function
\begin{equation}\label{eq:SE_kernel}
	k^a(z_i,z_j) = \sigma_{f,a}^2 \exp\!\left(-\frac{1}{2}{(z_i-z_j)}^\tp L_a^{-1} (z_i-z_j)\right) \eqc
\end{equation}
in which $L_a$ is a positive diagonal length scale matrix and $\sigma_{f,a}^2$ the
signal variance. It is, however,
straightforward to use any other (differentiable) kernel function.

The joint distribution of the training data and an arbitrary test point $z$ in
output dimension $a$ is given by
\begin{equation} \label{eq:jointDist}
	p({[y]}_a, {[\mathbf{y}]}_{\cdot,a}) =
	\mathcal{N}\left(\begin{bmatrix} m^a(\mathbf{z}) \\ m^a(z) \end{bmatrix},
	\begin{bmatrix} K^a_{\mathbf{z}\mathbf{z}} + I \sigma_a^2& K^a_{\mathbf{z}z}
			\\ K^a_{z\mathbf{z}} & K^a_{zz}\end{bmatrix} \right) \eqc
		\end{equation}
where ${[K^a_{\mathbf{z}z}]}_j = k^a(z_j,z)$, $K^a_{z\mathbf{z}} =
	{(K^a_{\mathbf{z}z})}^\tp$ and similarly $K^a_{zz} = k^a(z,z)$. The resulting
distribution of ${[y]}_a$ conditioned on the observed data points is again
Gaussian with $p({[y]}_a\,|\,{[\mathbf{y}]}_{\cdot,a}) =
	\mathcal{N}\left(\mu^d_a(z),\Sigma^d_a(z)\right)$ and
\begin{subequations}\label{eq:GPposterior}
	\begin{align}
		\mu_a^d(z)    & =  K_{z\mathbf{z}}^a{(K_{\mathbf{z}\mathbf{z}}^a + I \sigma^2_{a})}^{-1} {[\mathbf{y}]}_{\cdot,a} \eqc    \\
		\Sigma_a^d(z) & = K^a_{zz} - K_{z\mathbf{z}}^a{(K_{\mathbf{z}\mathbf{z}}^a + I \sigma^2_{a})}^{-1} K_{\mathbf{z}z}^a \eqd
	\end{align}
\end{subequations}
The resulting multivariate GP approximation of the unknown function $g(z)$ is then simply given
by stacking the individual output dimensions, i.e.
\begin{equation}\label{eq:multiGP}
	d(z) \sim \mathcal{N}\!\left(\mu^d(z),\Sigma^d(z)\right)
\end{equation}
with $\mu^d = [\mu^d_1,\ldots,\mu^d_\nd]^\tp$ and $\Sigma^d =
\diag([\Sigma^d_1,\ldots,\Sigma^d_\nd]^\tp)$. 

Evaluating mean and variance
in~\eqref{eq:GPposterior} has cost $\mathcal{O}(\nd \nz \nt)$ and
$\mathcal{O}(\nd \nz \nt^2)$, respectively and thus scales with the number of
data points. For large amounts of data points or fast real-time applications
this can
limit the use of GP models. To overcome these issues, various approximation
techniques have been proposed, one class of which is sparse Gaussian processes
using inducing inputs~\cite{Quinonero-Candela2007}, briefly outlined in the
following.
%
%
\subsection{Sparse Gaussian Processes}\label{ssc:sGP}
%
Many sparse GP approximations make use of \emph{inducing} targets
$\mathbf{y}_{\text{ind}}$, inputs $\mathbf{z}_{\text{ind}}$ and conditional distributions to
approximate the joint distribution~\eqref{eq:jointDist}~\cite{Quinonero-Candela2005}. 
Many such approximations exist, a popular of which is the Fully
Independent Training Conditional (FITC)~\cite{Snelson2006}, which we make use of
in this paper. Given a selection of inducing inputs $\mathbf{z}_{\text{ind}}$ and using
the shorthand notation $Q^a_{\zeta\tilde{\zeta}} :=
	\nobreak K^a_{\zeta\mathbf{z}_{\text{ind}}}
	{(K^a_{\mathbf{z}_{\text{ind}}\mathbf{z}_{\text{ind}}})}^{-1}
	K^a_{\mathbf{z}_{\text{ind}}\tilde{\zeta}}$ 
	the approximate posterior distribution
	is given by
\begin{subequations}\label{eq:sGP}
	\begin{align}
		\tilde{\mu}^{d}_a(z)    & = Q^a_{z\mathbf{z}}{(Q^a_{\mathbf{z}\mathbf{z}} + \Lambda)}^{-1}{[\mathbf{y}]}_{\cdot,a} \eqc \\
		\tilde{\Sigma}^{d}_a(z) & = K^a_{zz} - Q^a_{z\mathbf{z}}{(Q^a_{\mathbf{z}\mathbf{z}} + \Lambda)}^{-1}Q_{\mathbf{z}z}
	\end{align}
\end{subequations}
with $\Lambda = \diag(K^a_{\mathbf{z}\mathbf{z}} - Q^a_{\mathbf{z}\mathbf{z}} +
	I \sigma^2_{a})$. Concatenating the output dimensions similar
	to~\eqref{eq:GPposterior} we arrive at the approximation
\begin{equation*}
	\tilde{d}(z) \sim \mathcal{N}\!\left( \tilde{\mu}^d(z),\tilde{\Sigma}^d(z) \right) \eqd
\end{equation*}

Several of the matrices used in~\eqref{eq:sGP} can be precomputed and the
evaluation complexity becomes independent of the number of original data points.
With $\tilde{M}$ being the number of inducing points, this results in $\mathcal{O}( \nd
\nz \tilde{M} )$ and $\mathcal{O}( \nd \nz \tilde{M}^2 ) $ for the predictive
mean and variance, respectively.

There are numerous options for selecting the
inducing inputs, e.g.\ heuristically as a subset of the original data points, by
treating them as hyperparameters and optimizing their
location~\cite{Snelson2006}, or letting them coincide with test
points~\cite{Tresp2000}, which is often referred to as \emph{transductive}
learning. In Section~\ref{ssc:DynamicSparseGP} we make use of such transductive
ideas and propose a dynamic selection of inducing points, with a resulting local
approximation tailored to the predictive control task. 
\section{MPC Controller Design}\label{sc:controller_design}
%
We consider the design of an MPC controller for
system~\eqref{eq:system} using a GP approximation $d$ of the unknown function
$g$:
\begin{align}\label{eq:system_MPC}
	x_{i+1} = f(x_i,u_i) + B_d \left( d(x_i,u_i) + w_i \right)\eqd
\end{align}
At each time step, the GP approximation evaluates to a stochastic distribution
according to the residual model uncertainty and process noise, which is then
propagated forward in time. A stochastic MPC formulation allows the principled
treatment of chance constraints, i.e.\ by 
imposing a prescribed maximum probability of constraint violation.
Denoting a state and input reference trajectory by $X^r =
\{x^r_0,\ldots,x^r_N\}$, $U^r = \{u^r_0,\ldots,u^r_{N-1} \}$, respectively, the resulting stochastic finite time optimal control problem can be formulated
as
\begin{mini!}
{\Pi(x)}{\E\!\left(l_f(x_{N}\!-\!x^r_N) + \sum_{i=0}^{N-1} l(x_i\!-\!x^r_i,
	u_i\!- \!u^r_i)
	\right)\label{eq:costExpValue}} {\label{eq:org_optimization}}{}
\addConstraint{x_{i+1}}{= f(x_{i},u_{i}) + B_d (d(x_{i},u_{i}) + w_i)}
\addConstraint{u_i}{= \pi_i(x_i)}
\addConstraint{\Pr(x_{i+1}}{\in \mathcal{X}) \geq p_x\label{eq:jointChanceState}}
\addConstraint{\Pr(u_{i}}{\in \mathcal{U}) \geq p_u\label{eq:jointChanceInput}}
\addConstraint{x_{0}}{= x(k) \eqc}
\end{mini!}
for all $i=0,\ldots,N\!-\!1$ with appropriate terminal cost $l_f(x_{N})$ and
stage cost $l(x_{i}, u_{i})$, where the optimization is carried out over a
sequence of input policies $\Pi(x) = \{ \pi_0(x), \ldots, \pi_{N-1}(x) \}$.

In the form~\eqref{eq:org_optimization}, the optimization problem is
computationally intractable. In the following sections, we present techniques
for deriving an efficiently solvable approximation. Specifically, we use affine
feedback policies with precomputed linear gains and show how this allows simple
approximate propagation of the system uncertainties in terms of mean and
variance in prediction. Using these approximate distributions, we present a
framework for reformulating chance constraints deterministically and briefly
discuss the evaluation of possible cost functions.
%
%
\subsection{Ancillary Linear State Feedback Controller}\label{ssc:ancillary_cntr}
%
Optimization over general feedback policies $\Pi(x)$ is an infinite dimensional
optimization problem. A common approach that
integrates well in the presented framework is to restrict the policy class to
linear state feedback controllers
\[
	\pi_i(x_{i}) = K_i (\mu^x_i - x_{i}) + \mu^u_{i} \eqc
\]
where $\mu^x_i$ is the predicted mean of the state distribution. Using
pre-selected gains $K_i$, we then optimize over $\mu^u_{i}$, the mean of the
applied input. Note that due to the ancillary control law the input $u_i$
applied in prediction is a random variable, resulting from the affine
transformation of $x_i$.

\begin{figure}
	\centering
	\setlength\figureheight{2.5cm}
	\setlength\figurewidth{7.5cm}
	\input{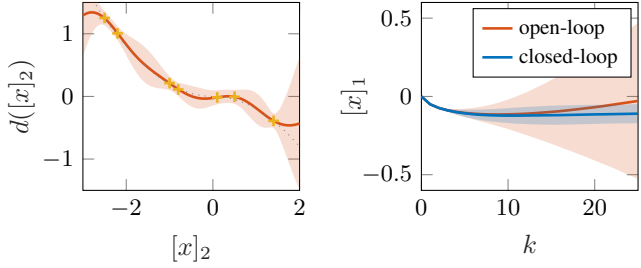}
	\caption{Propagation of uncertainty for double integrator with GP trained on
		a nonlinear friction term. The left plot displays the GP with
		$2$-$\sigma$ confidence bound, while the right plot shows the mean and
		$2$-$\sigma$ variance of repeated simulation runs of the system under an
		open-loop control sequence and closed-loop state feedback control.}\label{fg:ex_state_feedback}
\end{figure}

In Fig.~\ref{fg:ex_state_feedback}, the effect of state feedback on the
propagation of uncertainty is exemplified. It shows the evolution over time of a
double integrator with a quadratic friction term inferred by a GP\@. The right
plot displays mean and \mbox{2-$\sigma$} variance of a number of trajectories
from repeated simulations of the system with different noise realizations. Mean
and variance under an open-loop control sequence derived from a linear quadratic
infinite time optimal control problem are shown in red. The results with a
corresponding LQR feedback law applied in closed-loop are shown in blue. As
evident, open-loop input sequences can lead to rapid growth of uncertainty in
the prediction, and thereby cause conservative control actions in the presence
of chance constraints. Linear ancillary state feedback controllers are therefore
commonly employed in stochastic and robust MPC~\cite{Bemporad1999a}.

The adequate choice of ancillary feedback gains $K_i$ is generally a hard
problem, especially if the system dynamics are highly nonlinear. A useful
heuristic is to consider a linearization of the system dynamics around an
approximate prediction trajectory, which in MPC applications is typically
available using the solution trajectory of the previous time step. Feedback
gains for the linearized system can be derived e.g.\ by solving a finite horizon
LQR problem~\cite{Rawlings2009}. For mild or stabilizing nonlinearities, a fixed
controller gain $K_i = K$ can be chosen to reduce computational burden, as e.g.\
done for the example in Fig.~\ref{fg:ex_state_feedback}. 
%
%
\subsection{Uncertainty Propagation}\label{ssc:unc_propagation}
\subsubsection{Approximation as Normal Distributions}
Because of stochastic process noise and the representation by a GP model, future
predicted states result in stochastic distributions. Evaluating the posterior
of a GP from an input distribution is generally intractable and the resulting
distribution is not Gaussian~\cite{Quinonero-Candela2003}. While under certain
assumptions on $g$, some strict over-approximations of the resulting
distributions exist~\cite{Koller2018}, they are typically very conservative and
computationally demanding. We focus instead on computationally cheap and
practical approximations at the cost of strict guarantees. 

State, control input and nonlinear disturbance are approximated as jointly
Gaussian distributed at every time step.
\begin{align}
	\begin{bmatrix}
		x^\tp_{i} & u^\tp_{i} & {(d_{i} + w_i)}^\tp
	\end{bmatrix}^\tp \sim \mathcal{N}\left( \mu_i, \Sigma_i \right) \nonumber \\ 
	=	\mathcal{N} \! \left(
	\begin{bmatrix}
			\mu^x_i \\ \mu^u_i \\ \mu^d_i
		\end{bmatrix} ,
	\begin{bmatrix}
			\Sigma^{x}_i & \Sigma^{xu}_i & \Sigma^{xd}_i \\ \star & \Sigma^{u}_i & \Sigma^{ud}_i \\ \star & \star &  \Sigma^d_i + \Sigma^w
		\end{bmatrix}
	\right) \eqc \label{eq:GaussianApprox}
\end{align}
where $\Sigma^u_i = K_i \Sigma^x_i K_i^\tp$, $\Sigma^{xu}_i = \Sigma^x_i K_i^\tp$
and $\star$ denotes terms given by symmetry. 
Considering the covariances between states, inputs and GP, i.e.\ $\Sigma^{xd}$
and $\Sigma^{ud}$, is of great importance for an accurate propagation of
uncertainty when the GP model $d$ is paired with a nominal system model $f$.
Using a linearization of the nominal system dynamics around the mean
\[ 
	f(x,u) \approx f(\mu^x,\mu^u) + \nabla \!  f(\mu^x,\mu^u) 
	\left(\begin{bmatrix} x \\ u\end{bmatrix} -
	\begin{bmatrix} \mu^x \\ \mu^u \end{bmatrix}\right) \eqc 
\]
similar to extended Kalman filtering, this permits simple update equations
for the state mean and variance based on affine transformations of the Gaussian
distribution in~\eqref{eq:GaussianApprox}
\begin{subequations}
	\begin{align*}
		\mu^{x}_{i+1}  & = f(\mu^x_i, \mu^u_{i}) + B_d \mu^d_i 
		\eqc                                                                     \\
		\Sigma^x_{i+1} & = \left[\nabla \!  f(\mu_i^x,\mu_i^u) \ B_d\right] \Sigma_i
		{\left[\nabla \!  f(\mu_i^x,\mu_i^u)\ B_d \right]}^\tp\eqd
	\end{align*}
\end{subequations}
%
\subsubsection{Gaussian Process Prediction from Uncertain Inputs}\label{sc:approximation}
%
In order to define $\mu^d_i, \Sigma^d_i$, $\Sigma^{xd}_i$ and $\Sigma^{ud}_i$
different approximations of the posterior of a GP from a Gaussian input have
been proposed in the literature. In the following, we will give a brief overview
of the most commonly used techniques, for details please refer
to~\cite{Quinonero-Candela2003,Kuss2006,Deisenroth2010}. For comparison, we
furthermore state the computational complexity of these methods.
For notational convenience, we will use
\[
	\mu^z_i 		= \begin{bmatrix} \mu^x_i \\ \mu^u_i \end{bmatrix}, \ 
	\Sigma^z_i 		= \begin{bmatrix} 
		\Sigma^x_i & \Sigma^{xu}_i \\ 
		\star      & \Sigma^u_i	
	\end{bmatrix} , \ 
	\Sigma^{zd}_i 	= \begin{bmatrix} 
		\Sigma^{xd}_i \\
		\Sigma^{ud}_i 
	\end{bmatrix} \eqd
\]
\paragraph{Mean Equivalent Approximation}\label{ssc:mean_eq}
A straightforward and computationally cheap approach is to
evaluate~\eqref{eq:multiGP} at the mean, such that
\begin{subequations}\label{eq:mean_eq}
	\begin{align}
		\mu^d_i & = \mu^d(\mu^z_i) \eqc             \\
		\begin{bmatrix} \Sigma^{zd}_i \\ \Sigma^d_i \end{bmatrix}
		        & = \begin{bmatrix} 0 \\  \Sigma^d(\mu^z_i) \end{bmatrix} \eqd
	\end{align}
\end{subequations}
In~\cite{Girard2002} it was demonstrated that this can lead to poor
approximations for increasing prediction horizons as it neglects the
accumulation of uncertainty in the GP\@. The fact that this approach neglects
covariance between $d$ and $(x,u)$ can in addition severely deteriorate the
prediction quality when paired with a nominal system $f(x,u)$.
The computationally most expensive operation is the matrix multiplication
in~\eqref{eq:GPposterior} for each output dimension of the GP, such that the
complexity of one prediction step is $\mathcal{O}(n_d n_z \nt^2)$.
%
%
\paragraph{Taylor Approximation}\label{ssc:taylor_approx}
%
Using a first-order Taylor approximation of~\eqref{eq:GPposterior}, the expected
value, variance and covariance of the resulting distribution results in
\begin{subequations}\label{eq:taylor_approx}
	\begin{align}
		\mu^d_i                    & = \mu^d(\mu^z_i) \eqc \\
		\begin{bmatrix}
			\Sigma^{zd}_i \\ \Sigma^d_i 
		\end{bmatrix} & = 
		\begin{bmatrix}
			\Sigma^{z}_i {(\nabla \!  \mu^d(\mu^z_i))}^\tp \\
			\Sigma^d(\mu^z_i) + \nabla \!  \mu^d(\mu^z_i) \Sigma^{z}_i {(\nabla \!  \mu^d(\mu^z_i))}^\tp
		\end{bmatrix} \, .
	\end{align}
\end{subequations}
Compared to~\eqref{eq:mean_eq} this leads to correction terms for the posterior
variance and covariance, taking into account the gradient of the posterior mean
and the variance of the input of the GP\@.
The complexity of one prediction step amounts to $\mathcal{O}(n_d n^2_z \nt^2)$.
Higher order approximations are similarly possible at the expense of increased
computational effort. 
%
%
%
\paragraph{Exact Moment Matching}\label{ssc:exact_mm}
%
It has been shown that for certain mean and kernel functions, the first and
second moments of the posterior distribution can be analytically
computed~\cite{Quinonero-Candela2003}. Under the assumption that $z_i = {[x_i^\tp, u_i^\tp]}^\tp$ is normally distributed, we can define
\begin{subequations}\label{eq:exact_mm}
	\begin{align}
		\mu^d_i                    & = \mathbb{E}\left(d(z_i)\right) \eqc \\
		\begin{bmatrix}
			\Sigma^{zd}_i \\
			\Sigma^d_i
		\end{bmatrix} & = 
		\begin{bmatrix}
			\text{cov}\left(z_i,d(z_i)\right) \\
			\text{var}\left(d(z_i)\right) \, .
		\end{bmatrix}
	\end{align}
\end{subequations}
In particular, this is possible for a zero prior mean function and the squared
exponential kernel. As the procedure exactly matches first and second moments of
the posterior distribution, it can be interpreted as an optimal fit of a
Gaussian to the true distribution. Note that the model formulation directly
allows for the inclusion of linear prior mean functions, as they can be
expressed as a nominal linear system in~\eqref{eq:system_MPC}, while preserving
exact computations of mean and variance. The computational
complexity of this approach is $\mathcal{O}(n_d^2 n_z
	\nt^2)$~\cite{Deisenroth2010}.
\begin{figure}
	\centering
	\setlength\figureheight{3cm}
	\setlength\figurewidth{7cm}
	\input{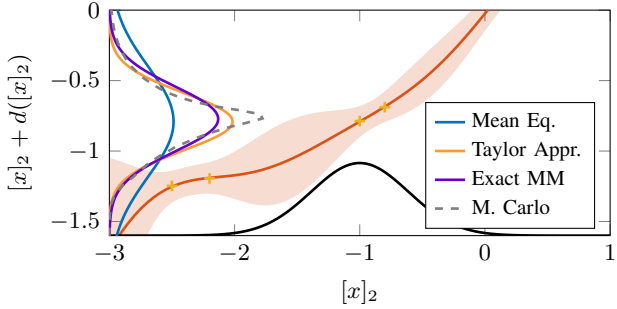}
	\caption{Comparison of prediction methods for Gaussian input ${[x]}_2$. The
		posterior distribution ${[x]}_2 + d({[x]}_2)$ is evaluated with the
		different approximation methods of Section~\ref{sc:approximation}. For
		reference, the true distribution is approximated by Monte Carlo
		simulation.}\label{fg:pred_comparison2}
\end{figure}
In Fig.~\ref{fg:pred_comparison2} the different approximation methods are
compared for a one-step prediction of the GP shown in
Fig.~\ref{fg:ex_state_feedback} with normally distributed input. As evident, the
predictions with \emph{Taylor Approximation} and \emph{Moment Matching} are
qualitatively similar, which will typically be the case if the second derivative
of the posterior mean and variance of the GP is small over the input
distribution. It is furthermore apparent that the posterior
distribution is not Gaussian and that the approximation as a Gaussian
distribution leads to prediction error, even if the first two moments are
matched exactly. This can lead to the effect that locally the \emph{Taylor
Approximation} provides a closer fit to the underlying distribution. The
\emph{Mean Equivalent Approximation} is very conservative by neglecting the
covariance between ${[x]}_2$ and $d({[x]}_2)$. Note that depending on the sign
of the covariance it can also be overly confident.

All presented approximations scale directly with the input and output
dimensions, as well as the number of data points and thus become expensive to
evaluate in high dimensional spaces. This presents a challenge for predictive
control approaches with GP models, which in the past have mainly focused on
relatively small and slow systems~\cite{Kocijan2004,Yang2015}. Using a nominal
model, however, it is possible to consider GPs that depend on only a subset of
states and inputs, such that the computational burden can be significantly
reduced. This is due to a reduction in the effective input dimension $n_z$, and
more importantly due to a reduction the in necessary training points $\nt$ for
learning in a lower dimensional space.
\begin{remark}
	With only slight notational changes, the presented approximation methods
	similarly apply to prior mean and kernel functions that are functions of
	only a subset of states and inputs.\label{prop:input_subset}
\end{remark}
Another significant reduction in the computational complexity can be achieved by
employing sparse GP approaches, for which
in the following we present a modification tailored to predictive control.
%
\subsection{Dynamic Sparse GPs for MPC}\label{ssc:DynamicSparseGP}
%
\begin{figure}[b!]
	\center{}
	\setlength\figureheight{6cm}
	\setlength\figurewidth{7.5cm}
	\input{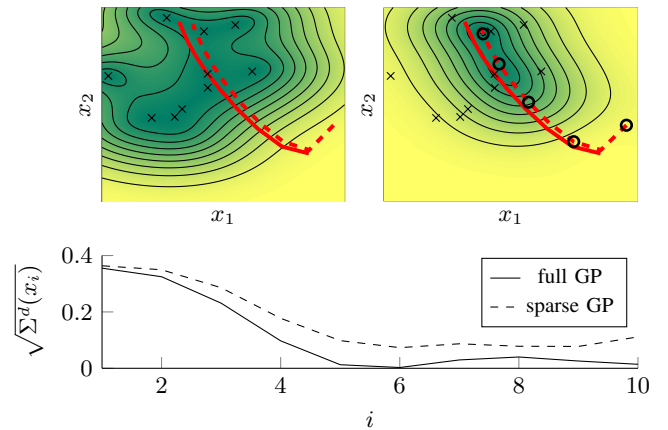}
	\caption{Illustration of dynamic sparse approximation~\cite{Hewing2018a}.
	Countor plot of the posterior variance of the full GP
	(top left) and dynamic sparse approximation (top right) with corresponding
	data points as black crosses. Trajectories
	planned by an MPC are shown as solid red lines, while the dashed lines show
	the prediction of the previous time step and used in the approximation, with
	the chosen inducing points indicated by black circles. The bottom plot shows
	the respective variances along the planned trajectory.}\label{fg:DSGP_toyEx}
\end{figure}
To reduce the computational burden one can make use of sparse GP approximations
as outlined in Section~\ref{ssc:sGP}, which often comes with little
deterioration of the prediction quality. A property of the task of predictive
control is that it lends itself to \emph{transductive} approximation schemes,
meaning that we adjust the approximation based on the test points, i.e. the
(future) evaluation locations in order to achieve a high fidelity local
approximation. In MPC, some knowledge of the test points is typically available
in terms of an approximate trajectory through the state-action space. This
trajectory can, e.g., be given by the reference signal or a previous solution
trajectory which will typically lie close to the current one. We therefore
propose to select inducing inputs locally at each sampling time according to the
prediction quality on these approximate trajectories. Ideally, the local
inducing inputs would be optimized to maximize predictive quality, which is,
however, not computationally feasible in the targeted millisecond sampling
times. Instead, inducing inputs are selected heuristically along the approximate
state input trajectory, specifically we focus here on the MPC solution computed
at a previous time step.

To illustrate the procedure, we consider a simple double integrator system
controlled by an MPC. Fig.~\ref{fg:DSGP_toyEx} shows 
the variance $\Sigma^d(x)$ of a GP trained on the systems' states. Additionally,
two successive trajectories from an MPC algorithm are displayed, in which the solid red line is the
current prediction, while the dashed line is the prediction trajectory from the
previous iteration. The plot on the left displays the original GP, with data
points marked as crosses, whereas on the right we have the sparse approximation
resulting from placing inducing points along the previous solution trajectory,
indicated by circles. The figure illustrates how full GP and sparse
approximation match closely along the predicted
trajectory of the system, while approximation quality far away from the
trajectory deteriorates. Since current and previous trajectory are similar,
however, local information is sufficient for computation of the MPC controller.
%
\subsection{Chance Constraint Formulation}
%
The tractable Gaussian approximation of the state and input distribution over
the prediction horizon in~\eqref{eq:GaussianApprox} can be used to
approximate the chance constraints in~\eqref{eq:jointChanceState} and~\eqref{eq:jointChanceInput}.

We reformulate the chance constraints on state and input w.r.t.\ their means
$\mu^x$, $\mu^u$ using constraint tightening based on the respective errors
$e^x_i = \mu^x_i - x_i$ and $\nobreak{e^u_i = K_i (\mu^x_i - x_i)}$. For this we make use
of probabilistic reachable sets~\cite{Hewing2018b}, an extension of the concept
of reachable sets to stochastic systems, which is related to probabilistic set
invariance~\cite{Kofman2012,Hewing2018}.
\begin{definition}[Probabilistic $i$-step Reachable Set]
	A set $\set{R}$ is said to be a probabilistic $i$-step
	reachable set ($i$-step PRS) of probability level $p$ if
	\[
		\Pr(e_i \in \set{R} \,|\, e_0 = 0) \geq p \eqd
	\]
\end{definition}

Given $i$-step PRS $\set{R}^x$ of probability level $p_x$ for the state error
$e^x_i$ and similarly $\set{R}^u$ for the input $e^u_i$, we can define tightened constraints on $\mu^x_i$ and $\mu^u_i$
as
\begin{subequations}\label{eq:constrTightening}
	\begin{align} 
		\mu^x_i \in \set{Z} & = \set{X} \ominus \set{R}^x \eqc   \\
		\mu^u_i \in \set{V} & = \set{U} \ominus \set{R}^u \eqc
	\end{align}
\end{subequations}
where $\ominus$ denotes the Pontryagin set difference. Satisfaction of the
tightened constraints~\eqref{eq:constrTightening} for the mean thereby implies satisfaction of the
original constraints~\eqref{eq:jointChanceState}
and~\eqref{eq:jointChanceInput}, i.e. when $\mu^x_i \in \set{Z}$ we have $\Pr(x_i = \mu^x_i + e^x_i \in \set{X})
\geq \Pr(e^x_i \in \set{R}^x) \geq p_x$.

Under the approximation of a normal distribution of $x_i$, the
uncertainty in each time step is fully specified by the variance matrix
$\Sigma^x_i$ and $\Sigma^u_i$. The sets can then be computed as functions of
these variances, i.e. $\set{R}^x(\Sigma^x_i)$ and
$\set{R}^u(\Sigma^x_i)$, for instance as ellipsoidal confidence regions. 
The online computation and respective tightening in~\eqref{eq:constrTightening},
however, is often computationally prohibitive for general convex constraint
sets.

In the following, we present important cases for which a computationally cheap
tightening is possible, such that it can be performed online. For brevity, we
concentrate on state constraints, as input constraints can be treated
analogously. 
%
\subsubsection{Half-space constraints}\label{sssc:halfspace_tightening}
Consider the constraint set $\set{X}$ given by a single
half-space constraint $\nobreak{\set{X}^{hs} := \setst{x}{h^\tp x \leq b}}$, $h \in
	\mathbb{R}^n$, $b \in \mathbb{R}_+$. Considering the marginal distribution of
the error in the direction of the half-space $h^\tp e^x_i \sim \mathcal{N}(0, h^\tp
\Sigma_i^x h)$ enables us to use the quantile function of a
standard Gaussian random variable $\phi^{-1}(p_x)$ at the needed probability 
of constraint satisfaction $p_x$ to see that 
\[ \set{R}^x(\Sigma^x_i) := \setst{e}{h^\tp e \leq \phi^{-1}(p_x)\sqrt{h^\tp \Sigma^x_i h}} \]
is an $i$-step PRS of probability level $p_x$. In this case, evaluating the
Pontryagin difference in~\eqref{eq:constrTightening} is straightforward and we
can directly define the tightened constraint on the state mean
\[ \set{Z}^{hs}(\Sigma^x_i) := \setst{z}{h^\tp z \leq b -
		\phi^{-1}(p_x)\sqrt{h^\tp \Sigma^x_i h}} \eqd \]
\begin{remark}\label{rm:slab}
	A tightening for slab constraints $\nobreak{\set{X}^{sl} = \setst{x}{|h^\tp x| \leq b}}$
	can be similarly derived as
	\[ 
		\set{Z}^{sl}(\Sigma^x_i) := \setst{z}{|h^\tp z| \leq b - \phi^{-1}\!\left( \frac{p_x\!+\!1}{2}\right) \sqrt{h^\tp \Sigma^x_i h}} \eqd
	\]
\end{remark}
%
\subsubsection{Polytopic Constraints}
Consider polytopic constraints, given as the intersection of $n_j$
half-spaces
\[ 
	\set{X}^{p} = \setst{x}{h^\tp_j x \leq b_j\ \forall j = 1, \ldots, n_j} \eqd 
\]
Making use of ellipsoidal PRS for Gaussian random variables one
can formulate a semidefinite program to tighten
constraints~\cite{Zhou2013,Schildbach2015}, which is computationally demanding
to solve online. Computationally cheaper approaches typically rely on more
conservative bounds. One such bound is given by Boole's inequality
\begin{align}\label{eq:boolesIneq}
	 & \Pr(E_1 \wedge E_2) \leq \Pr(E_1) + \Pr(E_2) \eqc
\end{align}
which allows for the definition of polytopic PRS based on individual
half-spaces. We present two possibilities, the first of which is based on
bounding the violation probability of the individual polytope faces, and a
second which considers the marginal distributions of $e^x_i$.

\paragraph{PRS based on polytope faces}
Similar to the treatment of half space constraints, we can define a PRS on the
state error $e^x$ which is aligned with the considered constraints. Using
Boole's inequality~\eqref{eq:boolesIneq}, we have that
\begin{align*}
	&\set{R}^x(\Sigma^x_i) = \\ &\setst{e}{h_j^\tp e \leq \phi^{-1}\!\left(1\!-\!\frac{1\!-\!p_x}{n_j}\right)\sqrt{h_j^\tp \Sigma^x_i h_j} \ \forall j = 1, \ldots, n_j}
\end{align*}
is a PRS of probability level $p_x$. The tightening results in 
\begin{align*}
	&\set{Z}^p(\Sigma^x_i) = \\ &\setst{z}{h_j^\tp z \leq b_j - \phi^{-1}\!\left(\!1\!-\!\frac{1\!-\!p_x}{n_j} \!\right)\!\sqrt{h_j^\tp \Sigma^x_i h_j} \ \forall j = 1, \ldots, n_j}
\end{align*}
which scales with the number of faces of the polytope and can therefore
be quite conservative if the polytope has many (similar) faces. 

\paragraph{PRS based on marginal distributions} 
Alternatively, one can define a PRS based on the marginal distribution of the
error $e^x$ in each dimension, which therefore scales with the state dimension.
We use Boole's inequality~\eqref{eq:boolesIneq} with Remark~\ref{rm:slab}
directly on marginal distributions in each dimension of the state to define a
box-shaped PRS of probability level $p_x$ 
\begin{align*}
	\set{R}^x(\Sigma^x_i) = \setst{e}{ |[e]_j| \leq \phi^{-1}(\bar{p})\sqrt{[\Sigma^x_i]_{j,j}} \ \forall j = 1, \ldots, n_x} \eqd
\end{align*}
with $\bar{p} = 1-\left(\!\frac{1}{n_x}\!-\!\frac{p_x\!+\!1}{2n_x}\right)$.
To compute the Pontryagin difference~\eqref{eq:constrTightening} we make use of
the following Lemma.
\begin{lemma} Let $\set{A} =
	\setst{x}{H x \leq b}$ be a polytope and $\nobreak{\set{B} = \setst{e}{-r \leq e
	\leq r, \, r \in \mathbb{R}^n_+}}$,  a box in $\mathbb{R}^n$. 
	
	Then $\set{A} \ominus
	\set{B} = \setst{x}{H x \leq b - |H| r}$, where $|H|$ is the element-wise absolute value.
\end{lemma}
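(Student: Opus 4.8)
The plan is to establish the set identity $\set{A} \ominus \set{B} = \setst{x}{Hx \leq b - |H|r}$ by unwinding the definition of the Pontryagin difference and exploiting the product structure of the box $\set{B}$. Recall $\set{A} \ominus \set{B} = \setst{x}{x + e \in \set{A} \ \forall e \in \set{B}}$. Since membership in $\set{A}$ is the conjunction of the $n_j$ scalar half-space conditions $H_{j\cdot}(x+e) \leq b_j$, the quantifier over $e$ distributes over rows: a point $x$ lies in $\set{A} \ominus \set{B}$ if and only if, for each row $j$, we have $H_{j\cdot}x + H_{j\cdot}e \leq b_j$ for every $e \in \set{B}$. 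Thus the problem reduces, row by row, to computing the worst case of the linear functional $e \mapsto H_{j\cdot}e$ over the box.

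First I would treat a single row. For fixed $j$, the tightest constraint on $x$ comes from the largest value that $H_{j\cdot}e$ can attain as $e$ ranges over $\set{B} = \setst{e}{-r \leq e \leq r}$, namely $x$ must satisfy $H_{j\cdot}x \leq b_j - \max_{e \in \set{B}} H_{j\cdot}e$. Since the box is a Cartesian product of intervals $[-r_k, r_k]$ and the functional is separable, the maximization splits coordinate-wise: $\max_{e\in\set{B}} \sum_k [H]_{jk}\,[e]_k = \sum_k \max_{|[e]_k|\leq r_k} [H]_{jk}\,[e]_k = \sum_k |[H]_{jk}|\, r_k$, where the per-coordinate optimum is achieved by choosing the sign of $[e]_k$ to match the sign of $[H]_{jk}$ (and $r_k \geq 0$ guarantees this value is nonnegative and attainable). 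This sum is exactly the $j$-th entry of the matrix-vector product $|H|r$, so the row-$j$ condition becomes $H_{j\cdot}x \leq b_j - {[|H|r]}_j$.

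Finally I would reassemble the rows: intersecting the $n_j$ tightened scalar constraints gives precisely $\setst{x}{Hx \leq b - |H|r}$, establishing both inclusions simultaneously since every step is an equivalence. I expect the argument to be essentially routine; the only point requiring care is the justification that the maximum over the box decomposes into independent coordinate-wise maxima, which hinges on both the separability of the linear functional and the product (axis-aligned) structure of $\set{B}$, together with the fact that each extremal $e$ lies in $\set{B}$ so the supremum is attained rather than merely approached. I would state this decomposition explicitly, as it is the one nontrivial ingredient, and note that the nonnegativity assumption $r \in \mathbb{R}^n_+$ ensures $\set{B}$ is a genuine box containing the origin so the Pontryagin difference behaves as expected.
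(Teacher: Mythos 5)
Your proposal is correct and follows essentially the same route as the paper's proof: unwind the Pontryagin difference, reduce row by row to $[H]_{j,\cdot}x \leq [b]_j - \max_{e\in\set{B}}[H]_{j,\cdot}e$, and evaluate the maximum over the box as $|[H]_{j,\cdot}|r$. You simply spell out the coordinate-wise sign-matching argument for that maximum, which the paper leaves implicit.
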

\begin{proof} From definition of the Pontryagin difference we have
\begin{align*}
	\set{A} \ominus \set{B} & = \setst{x}{x + e \in \set{A}\ \forall e \in \set{B}} \\
							& = \setst{x}{H(x + e) \leq b\ \forall e \in \set{B}} \\
							& = \setst{x}{[H]_{j,\cdot} x \leq [b]_j - \max_{e \in \set{B}} [H]_{j,\cdot} e \ \forall j = 1,\ldots, n }  \\
							& = \setst{x}{[H]_{j,\cdot} x \leq [b]_j - |[H]_{j,\cdot}| r \, \forall j = 1,\ldots, n } \\
							& = \setst{x}{H x \leq b - | H| r } \, ,
						\end{align*}
proving the result.
\end{proof}
The resulting tightened set is therefore
\[
	\set{Z}^p(\Sigma^x_i) = \setst{z}{H z \leq \tilde{b}(\Sigma^x_i)}
\]
with $\tilde{b}(\Sigma^x_i) = b - |H|\phi^{-1}(\bar{p})\sqrt{\diag(\Sigma^x_i)}$, where $\diag(\cdot)$ is the vector of
diagonal elements and the square root of the vector is taken element wise.

\begin{remark}\label{rm:individual constraints}
	Treating polytopic constraints through~\eqref{eq:boolesIneq} can lead to
	undesired and conservative individual constraints~\cite{Lorenzen2017}.
	Practically, it can therefore be beneficial to constrain the probability of
	violating each individual half-space constraint.
\end{remark}
%
\subsection{Cost Function}\label{ssc:cost_function}
%
Given the approximate joint normal distribution of state and input,
the cost function~\eqref{eq:costExpValue} can be evaluated using standard
stochastic formulations. For simplicity, we focus here on costs in form of
expected values, which encompass most stochastic objectives typically
considered. While the expected value for general cost functions needs to be
computed numerically, there exist a number of functions for which evaluation
based on mean and variance information can be analytically expressed and is
computationally cheap. The most prominent example for tracking tasks is a
quadratic cost on states and inputs 
\begin{equation}
	l(x_i-x_i^r,u_i-u_i^r) = \Vert x_i - x_i^r \Vert_Q^2 + \Vert u_i - u_i^r \Vert_R^2 \label{eq:quad_cost}
\end{equation}
with appropriate weight matrices $Q$ and $R$, typically satisfying
$Q \succeq 0$ and $R \succ 0$, resulting in
\begin{align*}
	&\E \left(l(x_i-x_i^r,u_i-u_i^r)\right) = \\
	&\Vert \mu^x_i - x_i^r \Vert_Q^2 + \tr(Q \Sigma^x_i) + \Vert \mu^u_i - u_i^r \Vert_R^2 +\tr(R \Sigma^u_i) \eqd
\end{align*}
Further examples include a saturating cost~\cite{Deisenroth2010}, risk sensitive
costs~\cite{Whittle1981} or radial basis function networks~\cite{Kamthe2018}.

We refer to the evaluation of the expected value~\eqref{eq:costExpValue} in
terms of mean and variance as
\begin{align*}
	\E \left(l(x_i-x_i^r,u_i-u_i^r)\right) = c(\mu^x_i-x_i^r,\mu^u_i-u_i^r, \Sigma^x_i)
\end{align*} and similarly $c_f(\mu^x_N-x_N^r, \Sigma^x_N)$ for the terminal cost.
\begin{remark}
While many performance indices can be expressed as expected values, there exist
various stochastic cost measures, such as
conditional value-at-risk. Given the approximate state distributions, many can be
similarly considered.
\end{remark}

\subsection{Tractable MPC Formulation with GP Model}
%
By bringing together the approximations of the previous sections, the following
tractable approximation of the MPC problem~\eqref{eq:org_optimization} can be
derived:
\begin{mini}
{\{\mu^u_i\}}{ c_f(\mu^x_N\!-\!x^r_N, \Sigma^x_N) + \sum_{i=0}^{N-1}
c(\mu^x_i\!-\!x^r_i,	\mu^u_i\!-\!u^r_i, \Sigma^x_i)} {\label{eq:opt_final}}{}
\addConstraint{\mu^{x}_{i+1}}{= f(\mu^x_{i},\mu^u_{i}) + B_d \mu^d_i}
\addConstraint{\Sigma^x_{i+1}}{= \left[\nabla \!  f(\mu_i^x,\mu^u_i) \ B_d \right] \Sigma_{i}{\left[\nabla \!  f(\mu_i^x,\mu^u_i) \ B_d \right]}^\tp}
\addConstraint{\mu^{x}_{i+1}}{ \in \set{Z}(\Sigma^x_{i+1})}
\addConstraint{\mu^u_{i}}{ \in \set{V}(\Sigma^x_i)}
\addConstraint{\mu^{d}_{i}, \Sigma_i}{\text{ according to~\eqref{eq:GaussianApprox} and~\eqref{eq:mean_eq},~\eqref{eq:taylor_approx} or~\eqref{eq:exact_mm}}}
\addConstraint{\mu^x_0}{= x(k), \Sigma^x_0 = 0}
\end{mini}
for $i = 0,\ldots,N\!-\!1$. The resulting optimal control law is obtained in a
receding horizon fashion as $\kappa(x(k)) = \mu^{u*}_0$, where $\mu^{u*}_0$ is
the first element of the optimal control sequence $\{\mu^{u*}_0, \ldots,
\mu^{u*}_N\}$ obtained from solving~\eqref{eq:opt_final} at state $x(k)$.

The presented formulation of the MPC problem with a GP-based model results in a
non-convex optimization problem. Assuming twice differentiability of kernel and
prior mean function, second-order derivative information of all quantities is available. Problems of this form can typically be solved to local
optima using Sequential Quadratic Programming (SQP) or nonlinear interior-point
methods~\cite{Diehl2009}.
There exist a number of general-purpose solvers that can be applied to solve
this class of optimization problems, e.g. the openly available
IPOPT~\cite{Wachter2006}. In addition, there are specialized solvers that
exploit the structure of predictive control problems, such as
ACADO~\cite{Houska2011} or FORCES Pro~\cite{FORCESPro,Zanelli2017}, which
implements a nonlinear interior point method. Derivative information can be
automatically generated using automated differentiation tools, such as
CASADI~\cite{Andersson2013}. 

In the following, we demonstrate the proposed algorithm and its properties using
one simulation and one experimental example. The first is an illustrative
example of an autonomous underwater vehicle (AUV) which, around a trim point, is
well described by nominal linear dynamics but is subject to nonlinear friction
effects for larger deviations.
The second example is a hardware implementation demonstrating the approach for
autonomous racing of miniature race cars which are described by a nonlinear
nominal model. 
While model complexity and small sampling times require additional
approximations in this second example, we show that we can leverage key benefits
of the proposed approach in a highly demanding real-world application.


\section{Online Learning for Autonomous Underwater Vehicle}\label{ssc:auv}
%
We consider the depth control of an autonomous underwater vehicle (AUV). The
vehicle is described by a nonlinear continuous-time model of the vehicle at constant
surge velocity as ground truth and for simulation purposes~\cite{Naik2007}.
Around a trim point of purely horizontal movement, the states and input of the
system are 
\begin{description}
	\item[${[x]}_1$] the pitch angle relative to trim point [\si{rad}],
	\item[${[x]}_2$] the heave velocity relative to trim point [\si{m/s}],
	\item[${[x]}_3$] the pitch velocity [\si{rad/s}],
	\item[$u$] the stern rudder deflection around trim point [\si{rad}].
\end{description}

We assume that an approximate linear 
system model is given, which in practice can be
established using methods of linear system identification around the trim point
of the system. Using a zero-order hold discretization of the linear part of the
model with $T_s = \SI{100}{ms}$ we obtain
\begin{align*}
	x(k\!+\!1) & = A x(k) + B u(k) + B_d \left(g(x(k)) + w(k)\right) \eqc
\end{align*}
which is in the form of~\eqref{eq:system}. 
The nonlinearity results from friction effects when moving in the water, which is therefore modeled as only affecting the (continuous time) velocity states,
\[ 
	B_d = \begin{bmatrix} 0 & \frac{T_s^2}{2} \\ T_s & 0 \\ 0 & T_s \end{bmatrix}, \ g(x(k)) = g({[x(k)]}_2,{[x(k)]}_3) : \mathbb{R}^2 \rightarrow \mathbb{R}^2 \eqd
\]
Note that the eigenvalues of $A$ are given by $\lambda
= \{ 1.1, 1.03, 1, 0.727 \}$, i.e.\ the linear system has one integrating
and two unstable modes.

In the considered scenario, the goal is to track reference changes from the
original zero set point to a pitch angle of $30\degree$, back to zero and then
to $45\degree$. We furthermore consider a safety state constraint on the pitch
angle of at least $10\degree$ below the reference, as well as input constraints
corresponding to $\pm 20\degree$ rudder deflection. In this example, we treat
the case of \emph{online learning}, that is we start without any data
about the nonlinearity $g$, collect measurement data during operation and
enhance performance online.
%
\subsection{GP-based Reference Tracking Controller}\label{sssc:auv_controllerDesign}
%
GP data $\set{D} = \{\mathbf{y}, \mathbf{z} \}$ is generated by calculating the
deviation of the linear model from the measured states, as described in
Section~\ref{ssc:gaussian_processes}, where the input data is reduced to the
velocity states $z_j = [{[x_j]}_2, {[x_j]}_3]^\tp$. Data is continuously updated
during the closed-loop run. Specifically, we consider a new data point every 5
time steps, and keep track of 30 data points by discarding the oldest when
adding a new point. The training data is initialized with 30 data points of zero
input and zero output. We employ a squared exponential
kernel~\eqref{eq:SE_kernel} for both output dimensions with fixed
hyperparameters $L_1 = L_2 = \text{diag}([0.35,0.15]^\tp)$, and variances
$\sigma_{f,1}^2 = 0.04$ and $\sigma_{f,2}^2 = 0.25$.

We use a quadratic stage cost as in~\eqref{eq:quad_cost}, with weight matrices
$Q = \text{diag}([1, 0, 10, 0.5^\tp])$, $R = 20$ and a prediction horizon $N =
35$ to track a pitch angle reference. The terminal cost $P$ is chosen according
to the solution of the associated discrete-time algebraic Riccati equation, i.e.
the LQR cost. As ancillary linear controller $K_i$, an infinite horizon LQR
control law based on the linear nominal model is designed using the same weights
as in the MPC and used in all prediction steps $i$. This stabilizes the linear
system and reduces uncertainty growth over the prediction horizon. To propagate
the uncertainties associated with the GP we make use of the \emph{Taylor
Approximation} outlined in Section~\ref{sc:approximation}. Constraints are
introduced based on Remark~\ref{rm:individual constraints} considering a maximum
probability of individual constraint violation of $2.28\%$, corresponding to a
$2$-$\sigma$ confidence bound. The MPC optimization problem~\eqref{eq:opt_final}
is solved using FORCES Pro~\cite{FORCESPro, Zanelli2017}.
%
\subsection{Results}\label{sssc:auv_results}
%
\begin{figure}
	\center{}
	\setlength\figureheight{2.5cm}
	\setlength\figurewidth{7cm}
%
%
\definecolor{mycolor1}{rgb}{0.85000,0.32500,0.09800}%
\definecolor{mycolor2}{rgb}{ 0.1961,0.8039,0.1961}%
\begin{tikzpicture}

\begin{axis}[%
width=0.951\figurewidth,
height=\figureheight,
at={(0\figurewidth,0\figureheight)},
scale only axis,
xmin=1.5,
xmax=6,
ymin=-0.4,
ymax=0.8,
axis background/.style={fill=white},
xlabel = {time [$\si{s}$]},
legend pos = south east,
]
\addplot[const plot, color=black, line width = 1pt] table[row sep=crcr] {%
0	0\\
4.5	0.555229202441568\\
6.09999999999999	0.555229202441568\\
};
\addlegendentry{$[x^r]_1$}

\addplot[const plot, color=mycolor1, dashed, forget plot, line width = 1pt] table[row sep=crcr] {%
0	-0.174532925199433\\
4.5	0.380696277242135\\
6.09999999999999	0.380696277242135\\
};

\addplot[const plot, color=mycolor2, dashed, forget plot, line width = 1pt] table[row sep=crcr] {%
0	-0.233124098676941\\
6.09999999999999	-0.233124098676941\\
};

\addplot[const plot, color=mycolor2, dashed, forget plot, line width = 1pt] table[row sep=crcr] {%
0	0.465007602120791\\
6.09999999999999	0.465007602120791\\
};

\addplot [color=mycolor1, line width=1.0pt, forget plot]
  table[row sep=crcr]{%
0	0\\
0.2225	0.000344900360326195\\
0.442297206319346	0.000498415706178923\\
0.615	0.000113706068555874\\
0.8625	-0.00106701944340659\\
1.0775	-0.00173887290087071\\
1.50028119412591	-0.00229400489884335\\
};

\addplot [color=mycolor1, dotted, line width=1.0pt]
  table[row sep=crcr]{%
1.5	-0.00229356456074559\\
1.6	-0.00095765107735879\\
1.7	0.00305482919195743\\
1.8	0.00918468059715227\\
1.9	0.0170013727487266\\
2	0.0261778744633796\\
2.1	0.0364692088992786\\
2.2	0.0476946289167053\\
2.3	0.0597233975669242\\
2.4	0.0724639044973809\\
2.5	0.0858556816137863\\
2.6	0.0998638214999517\\
2.7	0.11447528601164\\
2.8	0.129696563037599\\
2.9	0.14555205096244\\
3	0.162082409796753\\
3.1	0.179341937978363\\
3.2	0.197393887909124\\
3.3	0.216302661654241\\
3.4	0.236122199158329\\
3.5	0.256880685757332\\
3.6	0.278562811959518\\
3.7	0.301091732787595\\
3.8	0.324313011347402\\
4.1	0.395176332686886\\
4.2	0.417676959464317\\
4.3	0.438544686614382\\
4.4	0.456891496186883\\
4.5	0.471589259407684\\
};
\addlegendentry{$[x]_1$}

\addplot[area legend, draw=black, fill=mycolor1, draw opacity=0, fill opacity=0.3, forget plot]
table[row sep=crcr] {%
x	y\\
1.5	-0.00229356456074601\\
1.6	-0.000897452233385363\\
1.7	0.00385476199097761\\
1.8	0.0117225015908439\\
1.9	0.0221823969718487\\
2	0.0346963897044682\\
2.1	0.0488129580421021\\
2.2	0.0641743428447565\\
2.3	0.0805075001154634\\
2.4	0.0976112813739198\\
2.5	0.115344145648963\\
2.6	0.133613828685244\\
2.7	0.152369231623123\\
2.8	0.171594255875115\\
2.9	0.191302992946186\\
3	0.211535414134455\\
3.1	0.23235245475742\\
3.2	0.253829201696194\\
3.3	0.276044915901932\\
3.4	0.299069044920278\\
3.5	0.322943334282414\\
3.6	0.347661477832645\\
3.7	0.373148896564121\\
3.8	0.399245460600516\\
3.9	0.425692863882445\\
4	0.452126230177603\\
4.1	0.478067139700384\\
4.2	0.502912975598507\\
4.3	0.525914575352438\\
4.4	0.546130367769245\\
4.5	0.562346090201384\\
4.5	0.380832428613983\\
4.4	0.367652624604522\\
4.3	0.351174797876325\\
4.2	0.332440943330127\\
4.1	0.312285525673389\\
4	0.291381707787774\\
3.9	0.270271006752758\\
3.8	0.249380562094288\\
3.7	0.229034569011071\\
3.6	0.209464146086392\\
3.5	0.190818037232249\\
3.4	0.17317535339638\\
3.3	0.156560407406551\\
3.2	0.140958574122053\\
3.1	0.126331421199305\\
3	0.112629405459051\\
2.9	0.0998011089786945\\
2.8	0.0877988702000816\\
2.7	0.0765813404001559\\
2.6	0.0661138143146587\\
2.5	0.0563672175786107\\
2.4	0.0473165276208423\\
2.3	0.0389392950183856\\
2.2	0.0312149149886543\\
2.1	0.0241254597564545\\
2	0.0176593592222903\\
1.9	0.0118203485256037\\
1.8	0.00664685960346131\\
1.7	0.00225489639293767\\
1.6	-0.00101784992133186\\
1.5	-0.00229356456074601\\
}--cycle;

\addplot [color=mycolor2, line width=1.0pt, forget plot]
  table[row sep=crcr]{%
0	-2.60787771377835e-08\\
0.1	-2.60787771377835e-08\\
0.1	0.00112632211287722\\
0.2	0.00112632211287722\\
0.2	0.00351525051669399\\
0.3	0.00351525051669399\\
0.3	9.7096789001494e-05\\
0.4	9.7096789001494e-05\\
0.4	0.00187938587040204\\
0.5	0.00187938587040204\\
0.5	-0.00216879376434131\\
0.6	-0.00216879376434131\\
0.6	-0.00265938489306894\\
0.7	-0.00265938489306894\\
0.7	-0.00574570594558521\\
0.8	-0.00574570594558521\\
0.8	-0.00458641822974237\\
0.9	-0.00458641822974237\\
1	-0.00475254580979545\\
1	-0.00180226473162137\\
1.1	-0.00180226473162137\\
1.2	-0.00179306938360524\\
1.3	-0.00191233083591058\\
1.3	-0.00156908957005508\\
1.4	-0.00156908957005508\\
1.5	-0.00152342816974027\\
1.5	-0.102509348599269\\
};

\addplot [color=mycolor2, dotted, line width=1.0pt]
  table[row sep=crcr]{%
1.5	-0.102509348599269\\
1.6	-0.102509348599269\\
1.6	-0.0827711957751669\\
1.7	-0.0827711957751669\\
1.7	-0.0652956912561002\\
1.8	-0.0652956912561002\\
1.8	-0.0499918443640821\\
1.9	-0.0499918443640821\\
1.9	-0.0367688241539055\\
2	-0.0367688241539055\\
2	-0.0254766234718122\\
2.1	-0.0254766234718122\\
2.1	-0.0159270629866599\\
2.2	-0.0159270629866599\\
2.2	-0.00792415278604697\\
2.3	-0.00792415278604697\\
2.3	-0.0012851550668378\\
2.4	-0.0012851550668378\\
2.4	0.00414876623858618\\
2.5	0.00414876623858618\\
2.5	0.00851004039292036\\
2.6	0.00851004039292036\\
2.6	0.0119087703996028\\
2.7	0.0119087703996028\\
2.7	0.0144431544783146\\
2.8	0.0144431544783146\\
2.8	0.0162169149257458\\
2.9	0.0162169149257458\\
2.9	0.01736445297316\\
3	0.01736445297316\\
3	0.0180827012827063\\
3.1	0.0180827012827063\\
3.1	0.0186649321862502\\
3.2	0.0186649321862502\\
3.2	0.0195253998452172\\
3.3	0.0195253998452172\\
3.3	0.0211987102605748\\
3.4	0.0211987102605748\\
3.4	0.024297744313067\\
3.5	0.024297744313067\\
3.5	0.0294287334655428\\
3.6	0.0294287334655428\\
3.6	0.0370861405598601\\
3.7	0.0370861405598601\\
3.7	0.0475694190905553\\
3.8	0.0475694190905553\\
3.8	0.0609590572295255\\
3.9	0.0609590572295255\\
3.9	0.077164594538834\\
4	0.077164594538834\\
4	0.0960336343256705\\
4.1	0.0960336343256705\\
4.1	0.117522378569207\\
4.2	0.117522378569207\\
4.2	0.141976260010297\\
4.3	0.141976260010297\\
4.3	0.170556220771662\\
4.4	0.170556220771662\\
4.4	0.205582112211857\\
};
\addlegendentry{$u$}

\addplot[area legend, draw=black, fill=mycolor2, draw opacity=0, fill opacity=0.3, forget plot]
table[row sep=crcr] {%
x	y\\
1.5	-0.102509348599269\\
1.6	-0.0816168206935266\\
1.7	-0.0497126941035393\\
1.8	-0.0189362828150521\\
1.9	0.00922761873134837\\
2	0.0342808524988992\\
2.1	0.0562297833704087\\
2.2	0.0752915030293177\\
2.3	0.0917640102155233\\
2.4	0.105963858985628\\
2.5	0.11819685823406\\
2.6	0.128747930912204\\
2.7	0.137884151644181\\
2.8	0.145868316659326\\
2.9	0.152981396146107\\
3	0.159550860013837\\
3.1	0.16597845561464\\
3.2	0.172755303120595\\
3.3	0.180448667538517\\
3.4	0.18964720116757\\
3.5	0.200868961985489\\
3.6	0.214461427850803\\
3.7	0.230538432555531\\
3.8	0.248985997591787\\
3.9	0.269535981729126\\
4	0.291879994410986\\
4.1	0.315809830511625\\
4.2	0.341418258932491\\
4.3	0.369368381069181\\
4.4	0.400952948392228\\
4.4	0.0102112760314872\\
4.3	-0.0282559395258571\\
4.2	-0.0574657389118984\\
4.1	-0.08076507337321\\
4	-0.0998127257596458\\
3.9	-0.115206792651458\\
3.8	-0.127067883132736\\
3.7	-0.135399594374421\\
3.6	-0.140289146731083\\
3.5	-0.142011495054404\\
3.4	-0.141051712541436\\
3.3	-0.138051247017367\\
3.2	-0.133704503430161\\
3.1	-0.128648591242139\\
3	-0.123385457448424\\
2.9	-0.118252490199786\\
2.8	-0.113434486807834\\
2.7	-0.108997842687552\\
2.6	-0.104930390112999\\
2.5	-0.101176777448219\\
2.4	-0.0976663265084552\\
2.3	-0.0943343203491989\\
2.2	-0.0911398086014115\\
2.1	-0.0880839093437289\\
2	-0.0852340994425231\\
1.9	-0.0827652670391592\\
1.8	-0.0810474059131129\\
1.7	-0.0808786884086613\\
1.6	-0.0839255708568075\\
1.5	-0.102509348599269\\
}--cycle;

\end{axis}
\end{tikzpicture}%
	\caption{Predicted pitch angle ${[x]}_1$ and rudder deflection $u$. Dotted lines are the mean prediction and shaded the $2$-$\sigma$ confidence region. The dashed lines show the corresponding state and input constraints, while the black line is the pitch angle reference.}\label{fg:auv_pred}
\end{figure}
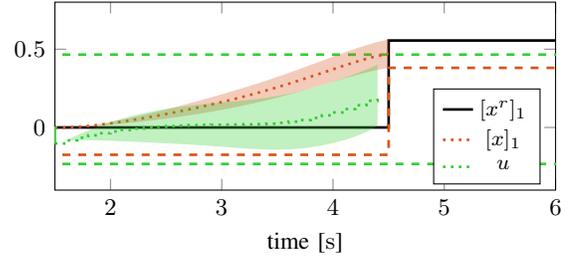
Fig.~\ref{fg:auv_pred} shows the prediction of the GP-based MPC for
the first reference change from $0\degree$ to $30\degree$. Since no data was
collected on the state
trajectory necessary for this change, the predicted state and input trajectories
are uncertain and a safety margin to the state constraint at the end of the
horizon is enforced. 
The resulting closed-loop trajectory, during which the system learns from online data,
is displayed in the top plot of
Fig.~\ref{fg:auv_comp}. For comparison, we run the same simulation with a soft
constrained linear MPC formulation with the same cost function, which does not
consider nonlinearities or uncertainties in the dynamics, shown in the
bottom plot. The results demonstrate improved performance of the GP-based MPC over the linear MPC, especially with regard to constraint satisfaction. The constraint on minimum pitch angle is violated under
the linear MPC control law during both reference changes, even 
though the soft constraint is chosen as an exact penalty function such that
constraints are always satisfied, if possible.
Fig.~\ref{fg:auv_err} exemplifies the residual model error during the closed-loop simulation applying 
the GP-based controller, as well as the predicted $2$-$\sigma$ residual error
bound of the GP. We observe the largest error
during reference changes, which are anticipated by the
GP-uncertainty, overall matching the resulting residual errors well.
\begin{figure}
	\begin{subfigure}{\linewidth}
		\begin{flushright}
			\setlength\figureheight{4cm}
			\setlength\figurewidth{7cm}
			\input{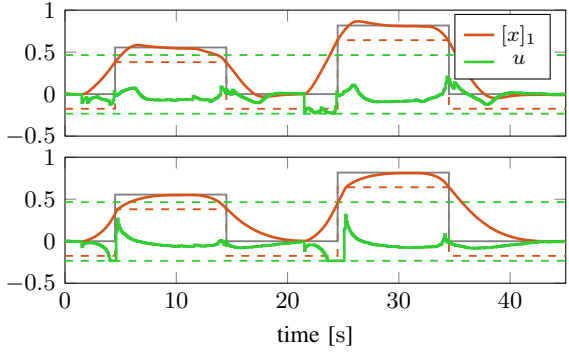} \hspace*{0.5cm} \\
			\caption{GP-based MPC (top) and linear MPC (bottom). The solid gray line shows the reference value of the pitch angle, dashed in the respective color the state and input constraints.}\label{fg:auv_comp}
		\end{flushright}
	\end{subfigure} 
	\begin{subfigure}{\linewidth}
		\vspace{0.2cm}
		\center{}
		\setlength\figureheight{2cm}
		\setlength\figurewidth{7cm}
		\input{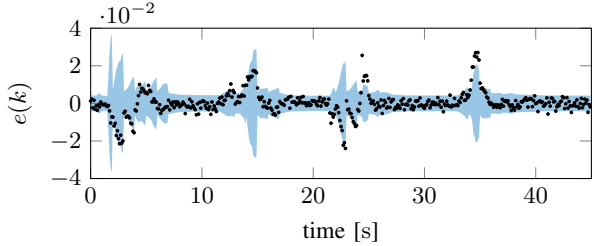} \hspace*{0.5cm} \\
		\caption{Residual model error with GP-based MPC. Measured model error 
		as black dots, 2-$\sigma$ residual model uncertainty from GP shaded blue.}\label{fg:auv_err}
	\end{subfigure}
	\caption{Simulation results of GP-based MPC for autonomous underwater vehicle in an online learning scenario.}
\end{figure}
%
\section{Autonomous Racing}\label{sc:AutonomousRacing}
%
%
As a second example we consider an autonomous racing scenario, in which the goal
is to drive a car around a track as quickly as possible, while keeping the
vehicle safe, i.e.\ while avoiding collision with the track boundaries. The
controller is based on a model predictive contouring control
formulation~\cite{Faulwasser2009, Lam2010} which has been applied to the problem
of autonomous racing in~\cite{Liniger2015}. Preliminary simulations results were
published in~\cite{Hewing2018a}.
\subsection{Car Dynamics}\label{ssc:NominalDynamics}
%
The race cars are modeled by continuous-time nominal dynamics $\dot{x} = f^c(x,u)$ obtained from a bicycle model with nonlinear tire forces given by a
simplified Pacejka tire model~\cite{Pacejka1992},
which results in the following states and inputs 
\[ x = [X,Y,\Phi,v_x,v_y,\omega]^\tp, \ u = [p, \delta]^\tp \eqc \] with position
$x^{XY} = {[X,Y]}^\tp$, orientation $\Phi$, longitudinal and lateral velocities $v_x$ and
$v_y$, and yaw rate $\omega$. The inputs to the system are the motor duty cycle
$p$ and the steering angle $\delta$. For details on the system modeling
please refer to~\cite{Liniger2015,Hewing2018a}. 

For use in the MPC formulation, we discretize the system using a Runge-Kutta
method with a sampling time of $T_s = \SI{20}{ms}$. In order to account for
model mismatch due to inaccurate parameter choices and limited fidelity of this
simple model, we add $g(x,u)$ capturing unmodeled dynamics, as well as additive
Gaussian white noise $w$. Due to the structure of the nominal model, i.e.\ since
the dynamics of the first three states are given purely by kinematic
relationships, we assume that the model uncertainty, as well as the process
noise $w$, only affect the velocity states $v_x$, $v_y$ and $\omega$ of the
system. From physical considerations, we can also infer that the unmodeled
dynamics do not depend on the position states, i.e.\ we assume 
\[  B_d = [0 \ I]^\tp , \ g(x,u) = g(v_x,v_y,\omega, p, \delta) : \mathbb{R}^5
	\rightarrow \mathbb{R}^3 \eqc \] resulting in
\begin{align*}
	x(k\!+\!1) = f(x(k),u(k)) \! + \! B_d (g(x(k),u(k)) \! + \! w(k)) \eqd 
\end{align*}

The system is subject to input constraints $\mathcal{U}$, i.e.\
the steering angle is limited to lie in $ \pm \delta_{\max}$ and the
duty cycle has to lie $[-0.1,1]$, where the negative values correspond to
negative applied torques, i.e. breaking. Additionally, operation requires the
vehicle to stay on track, which is expressed as a state constraint $\mathcal{X}$
on the car's position.
\begin{figure}[b]
	\center{}
	\includegraphics[width = 0.8\linewidth]{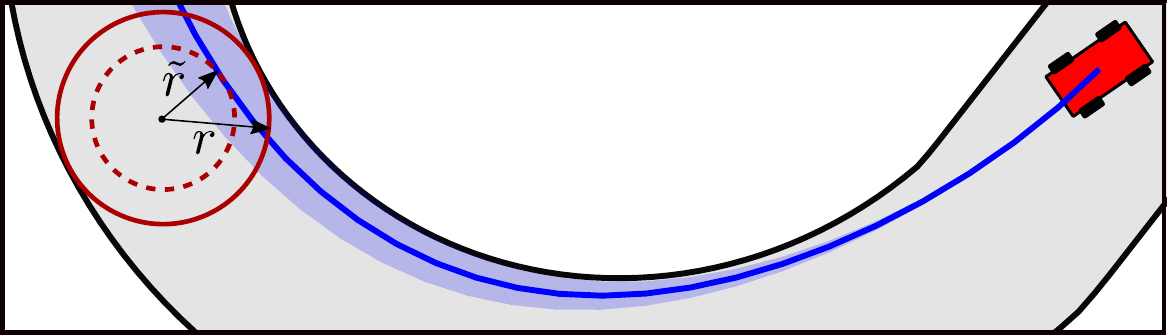}
	\caption{Illustration of the constraint tightening procedure. The effective track radius is adjusted based on the predicted position uncertainty.}\label{fg:raceCar_constrTightening}
\end{figure}
\begin{figure*}[h!]
	\begin{subfigure}{0.45\textwidth}
		\begin{tikzpicture}
			\node (pic) {\includegraphics[width =
			0.8\textwidth]{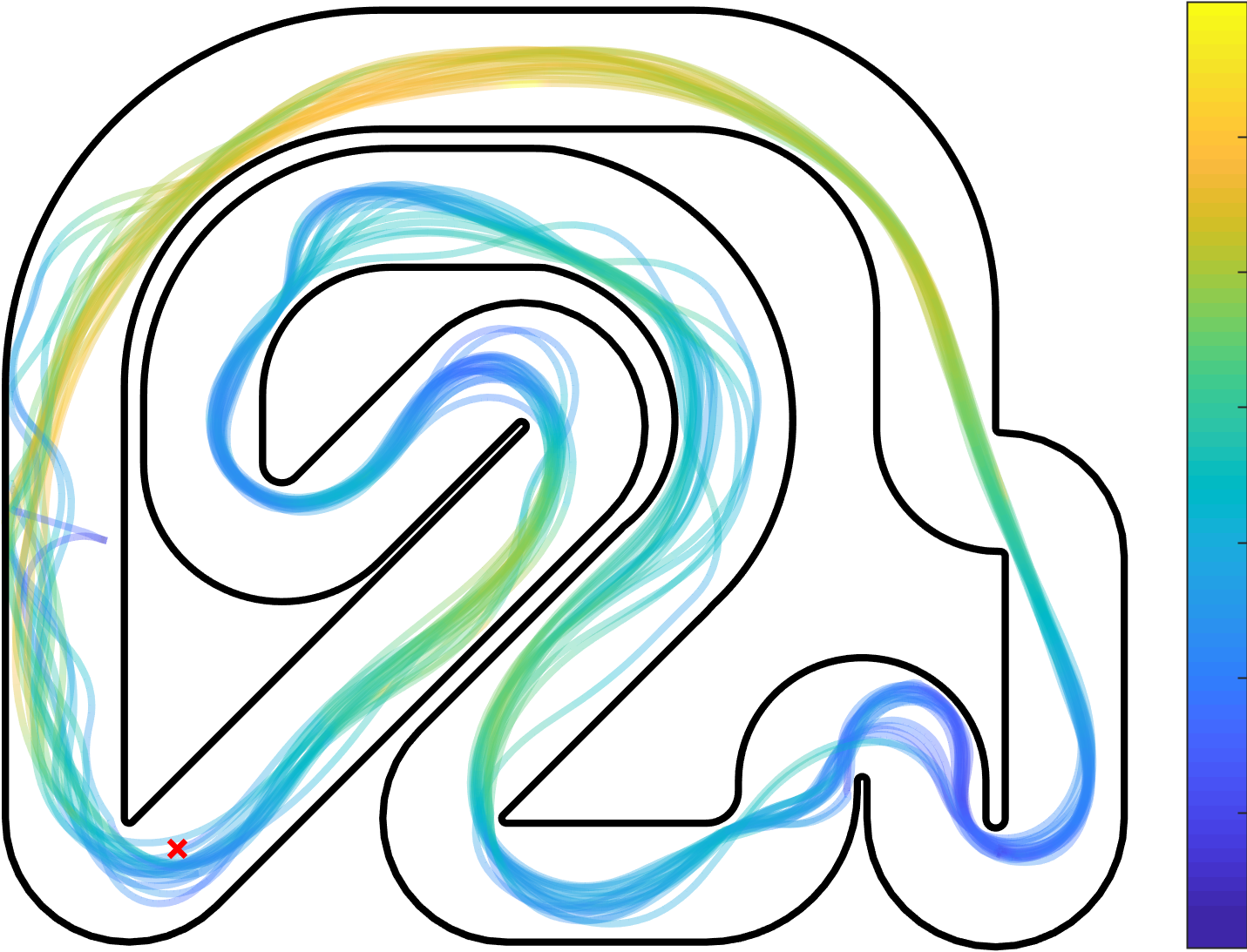}};
			\node[right =0.7cm of pic, rotate=90, xshift = -1.2cm] {Velocity [$\si{m/s}$]};
			\node[right =-0.1cm of pic, yshift = -2.45cm] {$0$};
			\node[right =-0.1cm of pic, yshift = -1.05cm] {$1$};
			\node[right =-0.1cm of pic, yshift = 0.38cm] {$2$};
			\node[right =-0.1cm of pic, yshift = 1.8cm] {$3$};
		\end{tikzpicture}
		\caption{Nominal controller}\label{fg:ORCA_nominal}
	\end{subfigure}
	\hspace{1.2cm}
	\begin{subfigure}{0.45\textwidth}
		\begin{tikzpicture}
			\node (pic) {\includegraphics[width =
			0.8\textwidth]{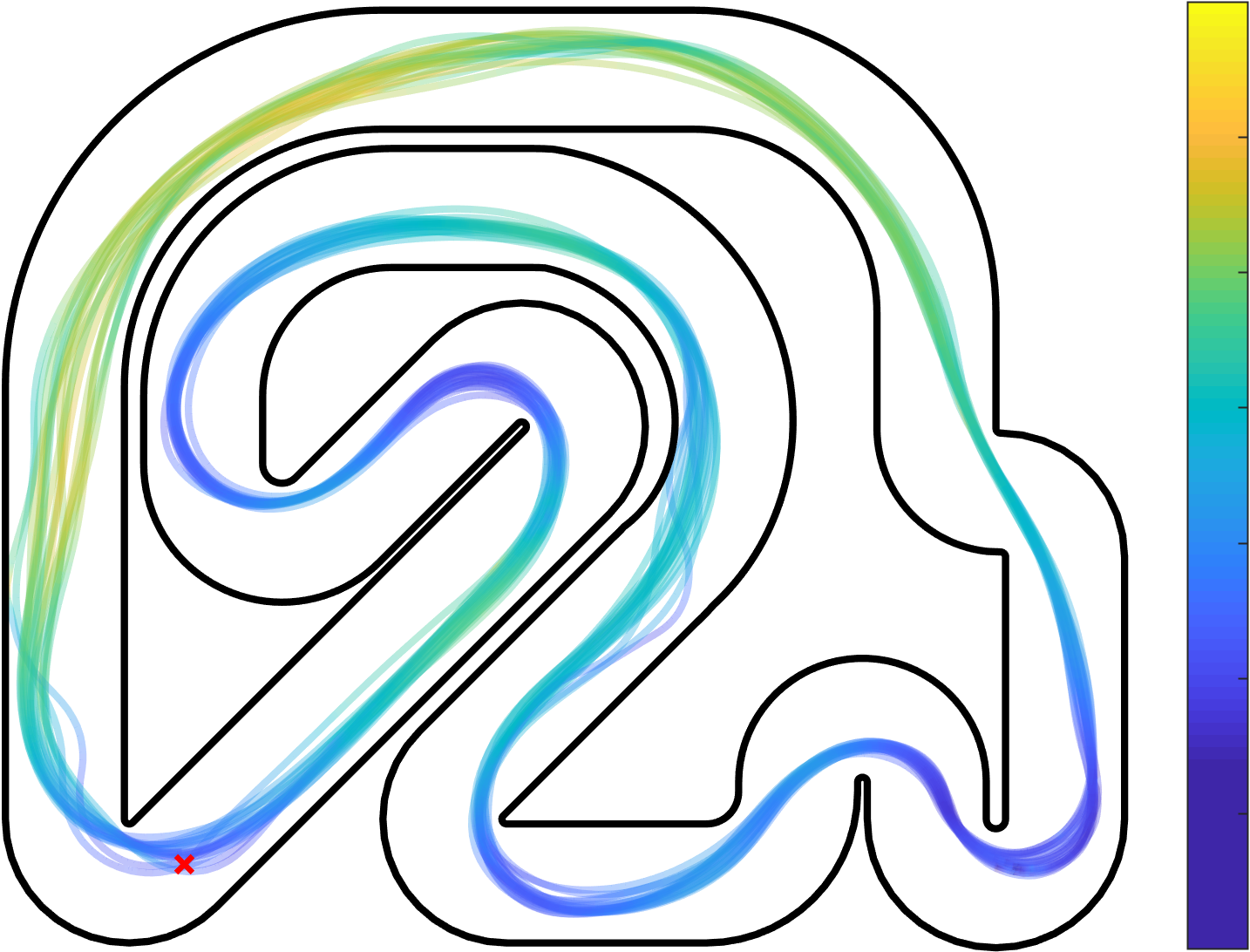}};
			\node[right =0.7cm of pic, rotate=90, xshift = -1.2cm] {Velocity [$\si{m/s}$]};
			\node[right =-0.1cm of pic, yshift = -2.45cm] {$0$};
			\node[right =-0.1cm of pic, yshift = -1.05cm] {$1$};
			\node[right =-0.1cm of pic, yshift = 0.38cm] {$2$};
			\node[right =-0.1cm of pic, yshift = 1.8cm] {$3$};
		\end{tikzpicture}
		\caption{GP-based controller}\label{fg:ORCA_gp}
	\end{subfigure}
\caption{Comparison of racelines with nominal and GP-based controller. The color indicates the 2-norm of the velocity, the red cross is the starting point of the race car.}
\end{figure*}
\subsection{GP-based Racing Controller}\label{ssc:ContouringControl}
%
We consider a race track given by its centerline and a fixed track width. The
centerline is described by a piecewise cubic spline polynomial, which is
parameterized by the path length $\Theta$. Progress along the track is
characterized by $\Theta_i$, which is introduced as an additional state and
enters the considered cost function linearly, encouraging a maximization of
progress along the track. Given a $\Theta_i$, we can evaluate the corresponding
centerline position $C(\Theta) = {[X_c(\Theta),Y_c(\Theta)]}^\tp$, such that the
constraint for the car to stay within the track boundaries can be expressed as
\begin{equation*}
	\mathcal{X}(\Theta_i) := \left\{x^{XY} \, \middle| \, \left \Vert x^{XY} - C(\Theta_i) \right\Vert \leq r \right\} \subset \mathbb{R}^2\eqc
\end{equation*}
where $r$ is half the track width. 

Based on approximate Gaussian distributions of the state in prediction we have
for the position error $e^{XY}_i = x^{XY}_i - \mu^{XY}_i \sim
	\mathcal{N}(0, \Sigma^{XY}_i)$ and define an ellipsoidal PRS as
\[ \set{R}^{ell}(\Sigma^{XY}_i) = \setst{e^{XY}}{ (e^{XY})^\tp {(\Sigma^{XY}_i)}^{-1} e^{XY} \leq \chi^2_2(p_x)} \eqc\]
where $\chi^2_2(p_x)$ is the quantile function of the chi-squared distribution
with two degrees of freedom. 
In order to simplify constraint tightening of $\set{X}(\Theta_i)$ we find an
outer approximation by a ball $\set{R}^{b} \subseteq \set{R}^{ell}$ as 
\[ \set{R}^{b}(\Sigma^{XY}_i) = \setst{e^{XY}}{ \left\Vert e^{XY} \right\Vert \leq \sqrt{
		\lambda_{\max}\!\left(\Sigma^{XY}_i\right)\chi^2_2(p_x)}} \eqc \]
where $\lambda_{\max}\!\left(\cdot\right)$ is the maximum eigenvalue which can be
readily computed since $\Sigma^{XY}_i$ is a 2 by 2 matrix.
The necessary
constraint tightening can therefore be expressed as
\begin{align} \label{eq:raceCar_constrTightening}
	\set{Z}(\Theta_i,\Sigma^{XY}_i) &= \set{X}(\Theta_i) \ominus \set{R}^{b}(\Sigma^{XY}_i) \\ &= \setst{z^{XY}}{\left\Vert z^{XY}  - C(\Theta_i) \right\Vert \leq \tilde{r}\left(\Sigma^{XY}_i\right)} \eqc  \nonumber
\end{align}
where $\tilde{r}\!\left(\Sigma^{XY}_i\right) = r - \sqrt{\chi^2_2(p_x)
		\lambda_{\max}\!\left(\Sigma_i^{XY}\right)}$.
Fig.~\ref{fg:raceCar_constrTightening} exemplifies the predicted evolution of
the cars position and the resulting constraint tightening.

The state is extended by previously applied inputs and large input changes are
additionally penalized. We make use of the Taylor
approximation~\eqref{eq:taylor_approx} to propagate uncertainties without the
use of an ancillary state feedback controller, i.e. $K = 0$. The prediction
horizon is chosen as $N = 30$ and we formulate the chance
constraints~\eqref{eq:raceCar_constrTightening} with $\chi^2_2(p_x) = 1$. To
reduce conservatism of the controller, constraints are instead only tightened
for the first $20$ prediction steps and are applied to the mean for the
remainder of the prediction horizon, similar to the method used
in~\cite{Carrau2016}. We reduce computation times by making use of the dynamic
sparse approximations with $10$ inducing points as outlined in
Section~\ref{ssc:DynamicSparseGP}, placing the inducing inputs regularly along
the previous solution trajectory.

To ensure real-time feasibility of the approach for the sampling time of $20
\text{ms}$ two additional approximations are applied. The variance dynamics are
pre-evaluated based on the previous MPC solution, which enables the
pre-computation of state constraints~\eqref{eq:raceCar_constrTightening} such
that they remain fixed during optimization. Additionally, we neglect the mean
prediction of the lateral velocity error $v_y$ since the state is difficult to
estimate reliably and the error generally small, such that we observed no
improvement when including it in the control formulation.
%
\subsection{Results}\label{sc:simulation_setup}
%
We start out racing the car using the nominal controller, i.e. the controller
without an added GP term, which therefore does not consider
uncertainties for constraint tightening. Since the nominal model is not well
tuned, driving behavior is
somewhat erratic and there are a number of small collisions with the track
boundaries. This can be observed in the racelines plotted in
Fig.~\ref{fg:ORCA_nominal} showing 20 laps run with the nominal controller.
Using the collected data, we train the GP error model $d$ using 325 data points.
We infer the hyperparameters in \eqref{eq:SE_kernel} as well as the noise level
$\Sigma_w$ using maximum likelihood optimization. 
The resulting racelines of 20 laps with the GP-based controller are
displayed in Figure~\ref{fg:ORCA_gp}, generally showing a much more
consistent and safe racing behavior. In particular, it can be seen that almost
all of the systematic and persistent problems in the raceline of the nominal
controller can be alleviated.



\begin{figure}[b]
	\center{}
	\setlength\figureheight{3cm}
	\setlength\figurewidth{7cm}
	\input{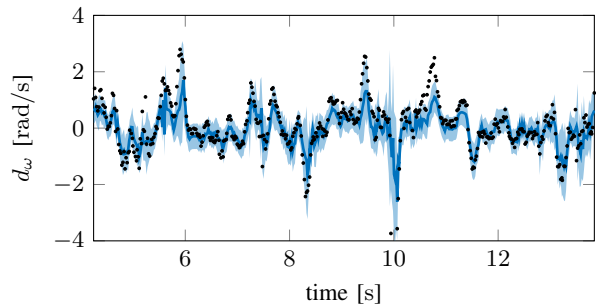}
	\caption{Dynamic sparse GP compensation of the yaw-rate error with 10 inducing inputs during the first race lap. The black dots show the measured error on the yaw rate at each time step, while the blue line shows the error predicted by the GP. The shaded region is the 2-$\sigma$ confidence interval.}\label{fg:ORCA_error}
\end{figure}

Fig.~\ref{fg:ORCA_error} shows the encountered dynamics error in the
yaw-rate and the predicted error during the first lap with the sparse
GP-based controller. Mean and residual uncertainty
predicted by the GP matches the encountered errors well. It is important to note
that the apparent volatility in the plot is not due to overfitting, 
but is instead due to fast changes in the input and matches the validation data,
i.e. the measured errors.
To quantify performance of the proposed controllers we compare in Table~\ref{tb:res} average lap time
$\overline{T}_l$, minimum lap time $T_{l,\min}$ as well as
the average 2-norm error in the system dynamics $\overline{\Vert e
\Vert}$, i.e.\ the difference between the mean state after one prediction step
and the realized state, $e(k\!+\!1) = \mu^x_1 - x(k\!+\!1)$.
We see that the GP-based controller is able to improve significantly on all
these quantities, with an average lap time improvement of $0.71$ s, or almost
$7\%$, which constitutes a large improvement in the considered racing task. This
is in part due to the improved system model, as evident in the average
dynamics error $\overline{\Vert e \Vert}$, but also due to the cautious nature
of the controller, which helps to further reduce collisions and large problems
in the raceline. Due to the cautious nature, the minimum
lap time gains are slightly less pronounced. In fact, the nominal controller
consistently displays higher top speeds, which often times, however, lead to
significant problems at the breakpoint to a slow corner.
Computation times are reported as average solve times $\overline{T}_c$ and 
the percentage of solutions in under 20 ms, $T_c < 20$ ms. Average solution
times of nominal and GP-based controller are similar. The percentiles
of solutions in under $20$ ms, however, differ significantly. This is mainly due
to frequent large re-planning actions occurring with the nominal controller.

The results therefore demonstrate that the presented GP-based controller can
significantly improve performance while maintaining safety in a hardware
implementation of a complex system with small sampling times.
\begin{table}
	\caption{Experimental results}\label{tb:res}
	\begin{center}
			   \begin{tabular}{cccccc} \toprule
				   Controller & $\overline{T}_l$ [$\si{s}$] & $T_{l,\min}$ [$\si{s}$]& $\overline{\Vert e \Vert}$ [$\si{-}$] & $\overline{T}_{\!c}$ [$\si{ms}$] & $T_c < \SI{20}{ms}$ \\  \midrule
				   Nominal  & 10.32 & 9.65 & 0.73 & 18.2 & $76.3 \%$ \\
				   GP-based & 9.61 & 9.27 & 0.33 & 17.2 & $99.8 \%$ \\
			   \end{tabular}
		   \end{center}
   \end{table}
\section{Conclusion}\label{sc:conclusion}
%
The paper discussed the use of Gaussian process regression to learn
nonlinearities for improved performance in model predictive control. Combining
GP dynamics with a nominal system allows for learning only parts of the
dynamics, which is key for keeping the required number of data points and
computational complexity of the GP regression feasible for online control.
Approximation methods for the propagation of the state distributions over the
prediction horizon were reviewed and it was shown how this enables a principled
treatment of chance constraints on both states and inputs.

A simulation example as well as a hardware implementation have shown how
the proposed formulations provide cautious control with improved performance
for medium-sized systems with low sampling times. In particular, we have
demonstrated in experiment that both performance and safety in an autonomous racing setting
can be significantly improved by using cautious data-driven techniques. 

\section*{Acknowledgement}
We would like to thank the Automatic Control Laboratory (IfA) at ETH Zurich and
in particular Alexander Liniger for his valuable input and support
with the hardware implementation on the miniature race cars.


\bibliographystyle{IEEEtran}
\bibliography{IEEEabrv,bibliography} 

\begin{thebibliography}{10}
\providecommand{\url}[1]{#1}
\csname url@samestyle\endcsname
\providecommand{\newblock}{\relax}
\providecommand{\bibinfo}[2]{#2}
\providecommand{\BIBentrySTDinterwordspacing}{\spaceskip=0pt\relax}
\providecommand{\BIBentryALTinterwordstretchfactor}{4}
\providecommand{\BIBentryALTinterwordspacing}{\spaceskip=\fontdimen2\font plus
\BIBentryALTinterwordstretchfactor\fontdimen3\font minus
  \fontdimen4\font\relax}
\providecommand{\BIBforeignlanguage}[2]{{%
\expandafter\ifx\csname l@#1\endcsname\relax
\typeout{** WARNING: IEEEtran.bst: No hyphenation pattern has been}%
\typeout{** loaded for the language `#1'. Using the pattern for}%
\typeout{** the default language instead.}%
\else
\language=\csname l@#1\endcsname
\fi
#2}}
\providecommand{\BIBdecl}{\relax}
\BIBdecl

\bibitem{Hunt1992}
K.~J. Hunt, D.~Sbarbaro, R.~Zbikowski, and P.~J. Gawthrop, ``{Neural networks
  for control systems-A survey},'' \emph{Automatica}, vol.~28, no.~6, pp.
  1083--1112, 1992.

\bibitem{Nguyen-Tuong2011}
D.~Nguyen-Tuong and J.~Peters, ``{Model learning for robot control: a
  survey},'' \emph{Cognitive Processing}, vol.~12, no.~4, pp. 319--340, Nov.
  2011.

\bibitem{Sigaud2011}
O.~Sigaud, C.~Salan, and V.~Padois, ``{On-line regression algorithms for
  learning mechanical models of robots: A survey},'' \emph{Robotics and
  Autonomous Systems}, vol.~59, no.~12, pp. 1115--1129, 2011.

\bibitem{Pillonetto2014}
G.~Pillonetto, F.~Dinuzzo, T.~Chen, G.~{De Nicolao}, and L.~Ljung, ``{Kernel
  methods in system identification, machine learning and function estimation: A
  survey},'' \emph{Automatica}, vol.~50, no.~3, pp. 657--682, 2014.

\bibitem{Klenske2016a}
E.~D. Klenske, M.~N. Zeilinger, B.~Sch\"{o}lkopf, and P.~Hennig, ``Gaussian
  process-based predictive control for periodic error correction,'' \emph{IEEE
  Transactions on Control Systems Technology}, vol.~24, no.~1, pp. 110--121,
  2016.

\bibitem{Quinonero-Candela2003}
J.~Quiñonero-Candela, A.~Girard, and C.~Rasmussen, ``Prediction at an
  uncertain input for {Gaussian} processes and relevance vector machines -
  application to multiple-step ahead time-series forecasting,'' Max Planck
  Institute for Biological Cybernetics, T{\"u}bingen, Germany, Tech. Rep.
  IMM-2003-18, 2003.

\bibitem{Kuss2006}
M.~Ku{\ss}, ``{Gaussian} process models for robust regression, classification,
  and reinforcement learning,'' Ph.D. dissertation, Technische Universit\"{a}t
  Darmstadt, Darmstadt, 2006.

\bibitem{Deisenroth2010}
M.~Deisenroth, ``Efficient reinforcement learning using {Gaussian} processes,''
  Ph.D. dissertation, Karlsruhe Institute for Technology, Karlsruhe, 2010.

\bibitem{Kocijan2004}
J.~Kocijan, R.~Murray-Smith, C.~E. Rasmussen, and A.~Girard, ``{Gaussian
  process model based predictive control},'' in \emph{American Control
  Conference}, Jun. 2004, pp. 2214--2219.

\bibitem{Grancharova2008}
A.~Grancharova, J.~Kocijan, and T.~A. Johansen, ``{Explicit stochastic
  predictive control of combustion plants based on Gaussian process models},''
  \emph{Automatica}, vol.~44, no.~6, pp. 1621--1631, 2008.

\bibitem{Cao2016}
G.~Cao, E.~M.-k. Lai, and F.~Alam, ``{Gaussian} process model predictive
  control of unmanned quadrotors,'' in \emph{International Conference on
  Control, Automation and Robotics}, 2016.

\bibitem{Yang2015}
X.~Yang and J.~M. Maciejowski, ``{Fault tolerant control using Gaussian
  processes and model predictive control},'' \emph{International Journal of
  Applied Mathematics and Computer Science}, vol.~25, no.~1, pp. 133--148,
  2015.

\bibitem{Ostafew2016a}
C.~J. Ostafew, A.~P. Schoellig, T.~D. Barfoot, and J.~Collier, ``Learning-based
  nonlinear model predictive control to improve vision-based mobile robot path
  tracking,'' \emph{Journal of Field Robotics}, vol.~33, no.~1, pp. 133--152,
  2016.

\bibitem{Ostafew2016}
C.~J. Ostafew, A.~P. Schoellig, and T.~D. Barfoot, ``Robust constrained
  learning-based {NMPC} enabling reliable mobile robot path tracking,''
  \emph{The International Journal of Robotics Research}, vol.~35, no.~13, pp.
  1547--1563, 2016.

\bibitem{Cao2017}
G.~Cao, E.~M.-K. Lai, and F.~Alam, ``\BIBforeignlanguage{English}{Gaussian
  process model predictive control of unknown non-linear systems},''
  \emph{\BIBforeignlanguage{English}{IET Control Theory \& Applications}},
  vol.~11, pp. 703--713(10), 2017.

\bibitem{Wang2016}
Y.~Wang, C.~Ocampo-Martinez, and V.~Puig, ``Stochastic model predictive control
  based on {Gaussian} processes applied to drinking water networks,'' \emph{IET
  Control Theory Applications}, vol.~10, no.~8, pp. 947--955, 2016.

\bibitem{Kamthe2018}
S.~Kamthe and M.~P. Deisenroth, ``Data-efficient reinforcement learning with
  probabilistic model predictive control,'' in \emph{International Conference
  on Artificial Intelligence and Statistics}, 2018.

\bibitem{Hewing2018b}
L.~Hewing and M.~N. Zeilinger, ``Stochastic model predictive control for linear
  systems using probabilistic reachable sets,'' \emph{Conference on Decision
  and Control}, 2018, \textit{accepted}.

\bibitem{Rasmussen2006}
C.~E. Rasmussen and C.~K.~I. Williams, \emph{{Gaussian processes for machine
  learning}}.\hskip 1em plus 0.5em minus 0.4em\relax The MIT Press, 2006.

\bibitem{Quinonero-Candela2007}
J.~Qui{\~{n}}onero-Candela, C.~E. Rasmussen, and C.~K. Williams,
  ``Approximation methods for {Gaussian} process regression,'' Microsoft
  Research, Tech. Rep. MSR-TR-2007-124, 2007.

\bibitem{Quinonero-Candela2005}
J.~Qui{\~{n}}onero-Candela, C.~E. Rasmussen, and R.~Herbrich, ``A unifying view
  of sparse approximate {Gaussian} process regression,'' \emph{Journal of
  Machine Learning Research}, vol.~6, pp. 1935--1959, 2005.

\bibitem{Snelson2006}
E.~Snelson and Z.~Ghahramani, ``Sparse gaussian processes using
  pseudo-inputs,'' in \emph{Advances in Neural Information Processing Systems},
  Y.~Weiss, B.~Sch\"{o}lkopf, and J.~C. Platt, Eds., 2006, pp. 1257--1264.

\bibitem{Tresp2000}
V.~Tresp, ``A {Bayesian} committee machine,'' \emph{Neural Computation},
  vol.~12, no.~11, pp. 2719--2741, Nov. 2000.

\bibitem{Bemporad1999a}
A.~Bemporad and M.~Morari, ``{Robust model predictive control: A survey},''
  \emph{Robustness in identification and control}, vol. 245, pp. 207--226,
  1999.

\bibitem{Rawlings2009}
J.~Rawlings and D.~Mayne, \emph{{Model Predictive Control: Theory and
  Design}}.\hskip 1em plus 0.5em minus 0.4em\relax Nob Hill Pub., 2009.

\bibitem{Koller2018}
T.~Koller, F.~Berkenkamp, M.~Turchetta, and A.~Krause, ``Learning-based model
  predictive control for safe exploration and reinforcement learning,''
  \emph{arXiv:0707.3168}, 2018.

\bibitem{Girard2002}
A.~Girard, C.~E. Rasmussen, and R.~Murray-Smith, ``{Gaussian} process priors
  with uncertain inputs : Multiple-step-ahead prediction,'' Department of
  Computing Science, University of Glasgow, Glasgow, Tech. Rep., 2002.

\bibitem{Hewing2018a}
L.~Hewing, A.~Liniger, and M.~N. Zeilinger, ``Cautious {NMPC} with {Gaussian}
  process dynamics for miniature race cars,'' \emph{European Control
  Conference}, 2018.

\bibitem{Kofman2012}
E.~Kofman, J.~A. {De Don}, and M.~M. Seron, ``{Probabilistic set invariance and
  ultimate boundedness},'' \emph{Automatica}, vol.~48, no.~10, pp. 2670--2676,
  2012.

\bibitem{Hewing2018}
L.~Hewing, A.~Carron, K.~Wabersich, and M.~N. Zeilinger, ``On a correspondence
  between probabilistic and robust invariant sets for linear systems,''
  \emph{European Control Conference}, 2018.

\bibitem{Zhou2013}
Z.~Zhou and R.~Cogill, ``Reliable approximations of probability-constrained
  stochastic linear-quadratic control,'' \emph{Automatica}, vol.~49, no.~8, pp.
  2435 -- 2439, 2013.

\bibitem{Schildbach2015}
G.~Schildbach, P.~Goulart, and M.~Morari, ``Linear controller design for chance
  constrained systems,'' \emph{Automatica}, vol.~51, pp. 278 -- 284, 2015.

\bibitem{Lorenzen2017}
M.~Lorenzen, F.~Dabbene, R.~Tempo, and F.~Allgöwer, ``Constraint-tightening
  and stability in stochastic model predictive control,'' \emph{IEEE
  Transactions on Automatic Control}, vol.~62, no.~7, pp. 3165--3177, 2017.

\bibitem{Whittle1981}
P.~Whittle, ``Risk-sensitive linear/quadratic/gaussian control,''
  \emph{Advances in Applied Probability}, vol.~13, no.~4, p. 764–777, 1981.

\bibitem{Diehl2009}
M.~Diehl, H.~J. Ferreau, and N.~Haverbeke, ``Efficient numerical methods for
  nonlinear {MPC} and moving horizon estimation problem formulation,''
  \emph{Nonlinear model predictive control}, pp. 391--417, 2009.

\bibitem{Wachter2006}
A.~W{\"a}chter and L.~T. Biegler, ``On the implementation of an interior-point
  filter line-search algorithm for large-scale nonlinear programming,''
  \emph{Mathematical programming}, vol. 106, no.~1, pp. 25--57, 2006.

\bibitem{Houska2011}
B.~Houska, H.~J. Ferreau, and M.~Diehl, ``{ACADO} {T}oolkit -- {A}n {O}pen
  {S}ource {F}ramework for {A}utomatic {C}ontrol and {D}ynamic
  {O}ptimization,'' \emph{Optimal Control Applications and Methods}, vol.~32,
  no.~3, pp. 298--312, 2011.

\bibitem{FORCESPro}
A.~Domahidi and J.~Jerez, ``{FORCES Professional},'' {embotech GmbH
  (\nobreak{\url{http://embotech.com/FORCES-Pro}})}, Jul. 2014.

\bibitem{Zanelli2017}
A.~Zanelli, A.~Domahidi, J.~Jerez, and M.~Morari, ``{FORCES NLP}: an efficient
  implementation of interior-point methods for multistage nonlinear nonconvex
  programs,'' \emph{International Journal of Control}, 2017.

\bibitem{Andersson2013}
J.~Andersson, ``A general-purpose software framework for dynamic
  optimization,'' Ph.D. dissertation, KU Leuven, 2013.

\bibitem{Naik2007}
M.~S. Naik and S.~N. Singh, ``{State-dependent Riccati equation-based robust
  dive plane control of AUV with control constraints},'' \emph{Ocean
  Engineering}, vol.~34, no. 11-12, pp. 1711--1723, 2007.

\bibitem{Faulwasser2009}
T.~Faulwasser, B.~Kern, and R.~Findeisen, ``{Model predictive path-following
  for constrained nonlinear systems},'' \emph{Conference on Decision and
  Control and Chinese Control Conference}, pp. 8642--8647, 2009.

\bibitem{Lam2010}
D.~Lam, C.~Manzie, and M.~Good, ``Model predictive contouring control,'' in
  \emph{Conference on Decision and Control}, 2010, pp. 6137--6142.

\bibitem{Liniger2015}
A.~Liniger, A.~Domahidi, and M.~Morari, ``{Optimization-based autonomous racing
  of 1:43 scale RC cars},'' \emph{Optimal Control Applications and Methods},
  vol.~36, no.~5, pp. 628--647, 2015.

\bibitem{Pacejka1992}
H.~B. Pacejka and E.~Bakker, ``The magic formula tyre model,'' \emph{Vehicle
  System Dynamics}, vol.~21, no. sup001, pp. 1--18, 1992.

\bibitem{Carrau2016}
J.~V. Carrau, A.~Liniger, X.~Zhang, and J.~Lygeros, ``Efficient implementation
  of randomized {MPC} for miniature race cars,'' \emph{European Control
  Conference}, pp. 957--962, 2016.

\end{thebibliography}

\end{document}